\documentclass{article}
\newif\ifappendix
\appendixtrue

\ifappendix
\else
\def\marginpar#1{}
\fi

\usepackage{kr}

\usepackage{times}
\usepackage{soul}
\usepackage{url}
\usepackage[utf8]{inputenc}
\usepackage[small]{caption}
\usepackage{graphicx}
\usepackage{amsmath}
\usepackage{amsthm}
\usepackage{booktabs}
\newtheorem{example}{Example}
\newtheorem{theorem}{Theorem}
\newtheorem{lemma}{Lemma}
\newtheorem{definition}{Definition}

\usepackage{amsmath,amssymb}
\usepackage{mathtools}  
\usepackage[vlined]{algorithm2e}
\usepackage[hypertexnames=false]{hyperref}
\usepackage{standalone}
\usepackage{enumitem}
\usepackage{booktabs}
\usepackage{longtable}
\usepackage{xcolor}
\usepackage{colortbl}
\usepackage{fontawesome5}
\usepackage{adjustbox}
\usepackage{xspace}
\usepackage{placeins}
\usepackage{subfigure}
\usepackage{tabularx}

\usepackage[disable]{todonotes}

\usepackage{tikz}
\usetikzlibrary{automata}
\usetikzlibrary{arrows.meta}
\usetikzlibrary{bending}
\usetikzlibrary{calc}
\usetikzlibrary{positioning}
\usetikzlibrary{shapes.geometric}
\usetikzlibrary{shapes.symbols}
\tikzset{
      >={Stealth[round,bend]},
      StealthDouble/.tip={Stealth[round,quick,length=5.75pt]},
      double>/.style={-StealthDouble,double},
      root/.style={draw,rectangle,rounded corners=3pt},
      inode/.style={draw,circle,inner sep=0pt,minimum height=1.2em},
      low/.style={->,densely dotted},
      high/.style={->},
      termn/.style={draw,rectangle},
      terma/.style={draw,rectangle,double},
      rlink/.style={->,shorten >=3pt},
      win/.style={fill=green!20},
      wine/.style={green, line width=4pt, opacity=0.3},
      automaton/.style={
        shorten >=1pt,>={Stealth[round,bend]},
        node distance=2cm,
        initial text=,
        every initial by arrow/.style={every node/.style={inner sep=0pt}},
      every state/.style={
        align=center,
        fill=white,
        minimum size=7.5mm,
        inner sep=0pt,
        execute at begin node=\strut,
      },%
   }
}

\usepackage{mathrsfs}           
\SetProcNameSty{texttt}
\SetKwProg{Fn}{Function}{}{end}
\SetVlineSkip{0.0em}
\SetKwComment{tcp}{// }{}
\SetKwInOut{Input}{input}
\SetKwInOut{Output}{output}
\SetKwInput{Var}{var}
\SetKwFunction{mk}{makebdd}
\SetKwFunction{apply}{apply2}
\SetKwFunction{applyone}{apply1}
\SetKwFunction{applysc}{apply2sc}
\SetKwFunction{eval}{eval}
\SetKwFunction{fun}{fun}
\SetKwFunction{leaffun}{leaffun}
\SetKwFunction{funsh}{f}
\SetKwFunction{qtb}{quant\_to\_bool}
\SetKwFunction{leaves}{leaves}
\SetKwFunction{solveone}{solve1}
\SetKwFunction{solvetwo}{solve2}
\SetKwFunction{newvertex}{new\_vertex}
\SetKwFunction{newedge}{new\_edge}
\SetKwFunction{freezevertex}{freeze\_vertex}
\SetKwFunction{setwinner}{set\_winner}
\SetKwFunction{realizability}{realizability}
\SetKwFunction{declarevertex}{declare\_vertex}
\SetKwFunction{refine}{refine}
\SetKwFunction{win}{win\textsubscript{\textsc{o}}}
\SetKwFunction{lose}{win\textsubscript{\textsc{i}}}
\SetKwFunction{dfs}{dfs}
\SetKw{Assert}{assert}
\SetKw{SuchThat}{such that}
\SetKw{Continue}{continue}
\SetKw{Break}{break}

\newcommand{\CA}{\mathcal{A}}

\newcommand{\CI}{\mathcal{I}}
\newcommand{\CO}{\mathcal{O}}
\newcommand{\CU}{\mathcal{U}}
\newcommand{\CS}{\mathcal{S}}
\newcommand{\CT}{\mathcal{T}}

\newcommand{\CP}{\mathcal{P}}
\newcommand{\CQ}{\mathcal{Q}}
\newcommand{\BB}{\mathbb{B}}

\newcommand{\MTBDD}{\mathsf{MTBDD}}

\newcommand{\lang}{\mathscr{L}}

\newcommand{\tr}{\mathsf{tr}}
\newcommand{\subs}{\mathsf{sf}}

\newcommand{\LTLf}{{\texorpdfstring{\ensuremath{\mathsf{LTL_f}}}{LTLf}}\xspace}
\newcommand{\ttrue}{\mathit{tt}}
\newcommand{\ffalse}{\mathit{ff}}
\newcommand{\F}{\mathsf{F}} 
\newcommand{\G}{\mathsf{G}} 
\newcommand{\X}{\mathsf{X}} 
\newcommand{\StrongX}{\mathsf{X^{!}}} 
\newcommand{\R}{\mathbin{\mathsf{R}}} 
\newcommand{\U}{\mathbin{\mathsf{U}}} 
\newcommand{\limplies}{\rightarrow}
\newcommand{\liff}{\leftrightarrow}
\newcommand{\lxor}{\oplus}

\newcommand{\QLTLf}{{\texorpdfstring{\ensuremath{\mathsf{QLTL_f}}}{QLTL_f}}\xspace}

\newcommand{\fp}[1]{{\sf fp}(#1)}
\newcommand{\fpObs}[1]{{\sf fpObs}(#1)}
\newcommand{\reach}[1]{{\mathsf{Reach}(#1)}}
\newcommand{\reachB}[1]{{\mathsf{Reach}^{B}(#1)}}

\newcommand{\spotbsc}{\texttt{ltlfsynt-po-naive}\xspace} 
\newcommand{\spototf}{\texttt{ltlfsynt-po}\xspace} 
\newcommand{\syftbsc}{\emph{Syft-bsc}\xspace} 
\newcommand{\syftmso}{\emph{Syft-mso}\xspace} 
\newcommand{\syftp}{\emph{Syft-proj}\xspace} 

\newcommand{\lucasbench}{\emph{TV-Benchmarks}\xspace} 
\newcommand{\syntcompbench}{\emph{SYNTCOMP-fin Benchmarks}\xspace} 

\newcommand{\propequiv}{\sim}

\newcommand{\citet}[1]{\citeauthor{#1}~\shortcite{#1}}

\definecolor{lime}{HTML}{A6CE39}
\DeclareRobustCommand{\orcidicon}{
	\hspace{-2.5mm}
	\begin{tikzpicture}[baseline={(0,-0.12)}]
	\draw[lime, fill=lime] (0,0)
	circle [radius=0.16]
	node[white] (ID) {{\fontfamily{qag}\selectfont \tiny ID}};
	\draw[white, fill=white] (-0.0625,0.095)
	circle [radius=0.007];
	\end{tikzpicture}
	\hspace{-2.5mm}
      }
\def\orcidID#1{\href{https://orcid.org/#1}{\smash{\orcidicon}}}

\title{On-the-fly \LTLf{} Synthesis under Partial Observability}

\author{
Nadav~Alon$^1$  \and
Supratik~Chakraborty$^2$  \and
Alexandre~\mbox{Duret-Lutz}$^3$ \and
Dror~Fried$^1$, \\
Lucas~M.~Tabajara$^4$ \and
Moshe~Y.~Vardi$^5$ \and
Shufang~Zhu$^6$
\affiliations
$^1$The Open University of Israel, Israel\\
$^2$IIT Bombay, India\\
$^3$LRE, EPITA, Le Kremlin-Bicêtre, France\\
$^4$Runtime Verification Inc., USA \\
$^5$Rice University, USA \\
$^6$University of Liverpool, UK \\
}

\begin{document}
\allowdisplaybreaks

\maketitle

\begin{abstract}
    \LTLf synthesis under partial observability requires reasoning about unobservable environment variables, which is typically handled by constructing a belief-state DFA via subset construction that universally quantifies these variables. Existing approaches perform this construction as a separate step prior to game solving, often generating belief states that are unnecessary in practice.
    We propose an on-the-fly approach to \LTLf synthesis under partial observability based on observable progression. Our method incrementally builds the belief-state DFA by progressing the specification with respect to observable variables only, universally quantifying unobservable variables on the fly. We prove the correctness of the construction and show that it naturally enables on-the-fly game solving, leading to a fully on-the-fly synthesis framework. Our implementation leverages DFAs represented using Multi-Terminal Binary Decision Diagrams: a compact representation that has proven highly effective for \LTLf synthesis under full observability. Experimental results demonstrate that our approach significantly outperforms existing methods and further highlight the practical benefits of integrating on-the-fly game solving with belief-state construction.

\end{abstract}

\section{Introduction}

A key challenge in Artificial Intelligence (AI) is enabling intelligent agents to autonomously deliberate their courses of action in order to achieve desired tasks~\cite{Reiter01,GhallabNauTraverso2016}. Reasoning about actions, planning, and sequential decision making are closely connected to Formal Methods for strategic reasoning, such as reactive synthesis~\cite{pnueli.89.popl,finkbeiner.16.dsse,ehlers.17.deds}. In reactive synthesis, an agent operates in an adversarial environment and must construct a strategy that guarantees task achievement under all possible environment behaviors. Within this setting, Linear Temporal Logic on finite traces (\LTLf) synthesis has become a natural and expressive framework for specifying and automatically synthesizing correct-by-construction strategies over finite executions~\cite{degiacomo.15.ijcai}. Moreover, \LTLf synthesis is closely related to (strong) planning for temporally extended goals in fully observable nondeterministic domains~\cite{cimatti.03.aij,bacchus.00.aij,calvanese.02.kr,baier.07.icaps,gerevini.09.aij,degiacomo.18.ijcai,camacho.19.icaps}.


In many realistic settings, however, agents do not have complete information about the environment in which they operate. They must therefore make decisions based on partial observations and reason about uncertainty while interacting with the environment~\cite{rintanen2004complexity}. As such, partial observability has been widely studied in planning and synthesis, including contingent and belief-based planning as well as reactive synthesis from logic specifications~\cite{BonetGeffner2000,HoffmannBrafman2005,Maliah2022,KupfermanV97,GiacomoV16,TV2020}.



In this paper, we focus on \LTLf synthesis under partial observability as described by~\citet{GiacomoV16}. In their setting, a specification is given, in the form of an \LTLf formula over observable environment variables, unobservable environment variables, and system variables, which are also observable. The challenge is to construct a controller that for every history of observable environment assignments, directs how to assign the system variables, inducing a finite trace such that, no matter what the assignments for the unobservable variables are, the overall sequence meets the given specification.

\citet{GiacomoV16} provide a general solution to \LTLf synthesis under partial observability. Given an \LTLf specification, their (belief-state-based) approach first constructs a deterministic finite automaton~(DFA) for the formula. To handle partial observability, the DFA is lifted to a belief-state automaton via a subset construction that universally quantifies over unobservable environment variables. The synthesis problem then reduces to solving a reachability game on the resulting belief-state DFA. An MSO-based approach was suggested by~\citet{TV2020}, which translates the \LTLf specification into a monadic second-order logic (MSO) formula with the unobservable variables universally quantified explicitly -- a second-order operation that exceeds the expressive power of FOL. The approach then constructs a belief-state DFA from the MSO formula. In both approaches, synthesis ultimately requires solving a reachability game on a fully built belief-state DFA.

Through a symbolic implementation and experimental evaluation, \citet{TV2020}
show that the MSO-based approach overall outperforms the other approaches in practice. A key reason for this is a semi-symbolic representation of DFAs~\cite{henrikson.95.tacas}, where transitions are compactly represented using Multi-Terminal Binary Decision Diagrams~(MTBDDs). This representation, called MTDFA in the sequel, allows extensive sharing of common transition structures across states, resulting in significantly more effective automaton construction.

The effectiveness of MTDFA-based representations is further confirmed by recent advances in \LTLf synthesis~(under full observability). Many state-of-the-art \LTLf synthesizers rely on such representations~\cite{zhu.17.ijcai,bansal.20.aaai,degiacomo.21.icaps,zhu.25.tacas,duret.25.ciaa}.
Using MTDFA alone, however, has its limitations. The top-ranked \LTLf synthesizer in the 2025 reactive synthesis competition\footnote{See~\url{https://www.syntcomp.org/syntcomp-2025-results/}.}, \texttt{ltlfsynt}, combines the MTDFA-based representation with \emph{on-the-fly} techniques, incrementally constructing the MTDFA and integrating this construction with game solving~\cite{duret.25.ciaa}. This demonstrates that a tight integration of MTDFA-based on-the-fly automaton construction and on-the-fly game solving can substantially improve performance in practice.

%
%

This naturally raises the question of whether the same combination of techniques can be brought to \LTLf synthesis under partial observability. While MTDFAs provide a powerful technique for compact automaton representation and on-the-fly construction, partial observability introduces an additional challenge: universal quantification over unobservable environment variables. In all existing approaches to \LTLf synthesis under partial observability,  such as that of~\citet{TV2020}, this quantification is handled conceptually as a separate step, rather than being operationally integrated with automaton construction and game solving.

In this paper, we address this challenge by bringing together the key techniques that have proven successful in \LTLf synthesis under full observability, with complete operational integration of MTDFAs  and universal quantification. We first propose an on-the-fly construction of belief-state DFAs from \LTLf specifications, which we call \emph{observable progression}. This technique incrementally builds an automaton for the \LTLf specification via formula progression, while at the same time universally quantifying over unobservable environment variables, therefore directly resulting in a belief-state DFA. This construction enables integrating game solving directly during belief-state DFA construction. Thus, belief states are generated incrementally and only when needed, providing an on-the-fly approach to \LTLf synthesis under partial observability, which still preserves the 2EXPTIME worst-case complexity, as for full-observability. Furthermore, to harness the efficiency of MTDFA, we implement our tool using the MTDFA representation, in which belief states are obtained naturally as MTBDD terminals.
In particular, with careful variable ordering, our MTDFA-based belief-state construction does not incur any additional blowup.
This construction allows us to further integrate on-the-fly game solving on the same MTDFA representation. To summarize, our approach enables the automaton construction, the universal quantification, and the game solving, all to be performed on the fly and directly on MTBDDs.

Our implementation resulted in a modified version of \texttt{ltlfsynt} that we call \spototf, which can now handle \LTLf synthesis under partial observability. Empirical evaluation showed that \spototf dramatically outperforms the tools from~\citet{TV2020}, which so far has represented the state of the art for partial-observability \LTLf synthesis. In addition, we show how each of the phases, belief-state DFA via observable progression, and on-the-fly game solving, contributes to the performance of \spototf. Detailed analysis demonstrates the practical benefits of enabling on-the-fly game solving.
\ifappendix
\else
Further comparisons can be found in the full version of the paper~\cite{KR26}.
\fi

\section{Preliminaries}


\subsection{Words over Assignments}

Let $\Sigma$ be a finite alphabet. We view a word over $\Sigma$ as a sequence of letters indexed by positions: an \emph{infinite word} is a function $\sigma:\mathbb{N}\to\Sigma$, whereas a \emph{finite word} of length $n$ is a function $\sigma:\{0,1,\dots,n-1\}\to\Sigma$. The collection of all infinite words is denoted $\Sigma^\omega$. For each $n\in\mathbb{N}$, the set of length-$n$ words is $\Sigma^n$, and we let $\Sigma^\star$ (resp. $\Sigma^+$) range over all finite words of length at least $0$ (resp. strictly positive). For a finite word $\sigma$, its length is written $|\sigma|$. If $\sigma\in\Sigma^n$ and $0\le i<n$, we write $\sigma(..i)$ for the prefix of $\sigma$ ending at position $i$. Note that $\sigma(..i)$ is of length $i+1$. Similarly, we denote $\sigma(i..)$ for the suffix starting at position $i$;
the same notation applies to $\sigma\in\Sigma^\omega$ in the natural way.

Let $\CP$ be a finite set of Boolean variables (equivalently, \emph{atomic propositions}). An \emph{assignment} for $\CP$ is a mapping  of variables to values in
$\BB=\{\bot,\top\}$, represented as $w:\CP\to\BB$.
We use
$\BB^\CP$ to denote the set of all such assignments. When $\CP_1$ and $\CP_2$ are disjoint variable sets, any pair of assignments $w_1\in\BB^{\CP_1}$ and $w_2\in\BB^{\CP_2}$ can be combined into a single assignment over $\CP_1\cup\CP_2$, written $w_1\sqcup w_2$, by agreeing with $w_1$ on $\CP_1$ and with $w_2$ on $\CP_2$. Formally, we have $(w_1\sqcup w_2)(v)=w_1(v)$ for $v\in\CP_1$ and $(w_1\sqcup w_2)(v)=w_2(v)$ for $v\in\CP_2$.

We use these objects to encode synchronous, discrete-time Boolean signal models. Assigning one variable in $\CP$ to each signal, the global behavior over time is represented by a finite word $\sigma\in(\BB^\CP)^+$ whose letters are assignments of $\CP$ at successive time steps. The operator $\sqcup$ is lifted pointwise to such words: for disjoint $\CP_1,\CP_2$ and words $\sigma_1\in(\BB^{\CP_1})^n$, $\sigma_2\in(\BB^{\CP_2})^n$ of the same length $n$, we define $\sigma_1\sqcup\sigma_2\in(\BB^{\CP_1\cup\CP_2})^n$ by $(\sigma_1\sqcup\sigma_2)(i)=\sigma_1(i)\sqcup\sigma_2(i)$ for every $i$ such that $0\le i<n$.

\subsection{Linear Temporal Logic over Nonempty Finite Words.}

We use classical $\LTLf$ semantics over nonempty finite
words~\cite{degiacomo.13.ijcai}.

\begin{definition}[$\LTLf$ formulas]
An $\LTLf$ formula $\varphi$ is built from a set $\CP$ of atomic propositions,
using the following grammar, where $p\in\CP$, and $\odot\in\{\land,\lor,\limplies,\liff,\ldots\}$ is any Boolean operator:
  $
  \varphi ::= \ttrue\mid\ffalse \mid p \mid \lnot\varphi \mid \varphi\odot\varphi \mid
  \X\varphi \mid \StrongX\varphi \mid \F\varphi\mid \G \varphi\mid \varphi\U\varphi \mid \varphi\R\varphi
  $.\looseness=-1

Symbols $\ttrue$ and $\ffalse$ represent the \emph{true} and \emph{false} $\LTLf$ formulas. Other than the usual Boolean operators, we have the temporal operators $\X$ (weak next), $\StrongX$ (strong next), $\F$~(finally), $\G$ (globally), $\U$ (until), and $\R$ (release). Let $\LTLf(\CP)$ denote the set of all formulas produced by the above grammar. For $\varphi \in \LTLf(\CP)$, we use $\subs(\varphi)$ to denote the set of all sub-formulas of $\varphi$. A \emph{maximal temporal subformula} of $\varphi$ is a subformula whose primary operator is temporal and that is not strictly contained within any other temporal subformula of $\varphi$.

The semantics of \LTLf{} defines when a formula $\varphi\in\LTLf(\CP)$ is satisfied by a word $\sigma\in (\BB^\CP)^n$ at position $0\le i< n$.  We denote this by $\sigma,i\models\varphi$, and define it as follows.
\begin{gather*}
\begin{aligned}
  \sigma,i \models \ttrue &\iff i<n & \sigma,i\models p &\iff p\in \sigma(i) \\
  \sigma,i\models \ffalse &\iff i=n\quad\quad & \sigma,i\models \lnot\varphi &\iff \lnot(\sigma,i\models \varphi)
\end{aligned}\\
\begin{aligned}
  \sigma,i\models \varphi_1\odot\varphi_2 &\iff (\sigma,i\models \varphi_1)\odot(\sigma,i\models \varphi_2)\\
   \sigma,i\models \X\varphi &\iff (i+1=n)\lor(\sigma,i+1\models \varphi)\\
    \sigma,i\models \StrongX\varphi &\iff (i+1<n)\land(\sigma,i+1\models \varphi)\\
    \sigma,i\models \F\varphi &\iff \exists j\in [i,n),\,\sigma,j\models \varphi\\
    \sigma,i\models \G\varphi &\iff \forall j\in [i,n),\,\sigma,j\models \varphi\\
    \sigma,i\models \varphi_1\U\varphi_2 &\iff \\ &\mathllap{\exists j{\in}}[i,n),\big((\sigma,j\models \varphi_2)\land
     (\forall k{\in} [i,j),\,\sigma,k\models \varphi_1)\big)\\
    \sigma,i\models \varphi_1\R\varphi_2 &\iff \\
    & \mathllap{\forall j{\in}}[i,n),\big((\sigma,j\models \varphi_2)\lor(\exists k{\in} [i,j),\,\sigma,k\models \varphi_1)\big)
 \end{aligned}
\end{gather*}

The set of words that satisfy $\varphi\in\LTLf(\CP)$ is represented by $\lang(\varphi)=\{\sigma\in (\BB^\CP)^+ \mid \sigma,0\models \varphi\}$.
\end{definition}

\begin{example}\label{ex:psi}
  Consider the following $\LTLf$ formula over
  $\CP=\{u,i,o\}$:
  $\Psi = \big(\G\F u \rightarrow \F(i \leftrightarrow o)\big) \land \big(\G\F\neg u \rightarrow \F(i \lor o)\big)$.
    Suppose $u$ and $i$ are inputs, and $o$ is the output of a system.
    The above formula specifies that if $u$ is set in the last step, then $o$ must eventually have the same value as $i$. Otherwise, at least one of $i$ or $o$ must eventually be set, although it is not necessary for them to have the same value.
  \end{example}

\begin{definition}[Propositional Equivalence~\cite{esparza.18.lics}]\label{def:propequiv}
    For $\varphi\in\LTLf(\CP)$, let $\varphi_P$ be the
    Boolean formula obtained from $\varphi$ by replacing every
    maximal temporal subformula $\psi$ by a Boolean variable $x_\psi$.
    Two formulas $\varphi_1,\varphi_2\in\LTLf(\CP)$ are 
    \emph{propositionally equivalent}, denoted $\varphi_1\propequiv \varphi_2$, if
    ${\varphi_1}_P$ and ${\varphi_2}_P$ are semantically equivalent Boolean formulas.
  \end{definition}
  Note that if ${\varphi_1} \propequiv {\varphi_2}$, then $\lang(\varphi_1) = \lang(\varphi_2)$, but the converse is not true in general. We use $[\varphi]_\propequiv\in\LTLf(\CP)$ to denote some unique representative of the equivalence class of $\varphi$ with respect to $\propequiv$.


\subsection{\LTLf{} Synthesis and Realizability}


Consider a setting where the desired behavior of a system interacting with an environment is specified by an \LTLf{} formula over variables representing inputs of the environment and outputs of the system. The synthesis problem asks us to design a \emph{controller}~(aka. \emph{strategy}/\emph{plan}) that observes the history of inputs and determines the outputs at every time step, such that the specification is satisfied for every input sequence produced by the environment.
Not all system inputs may be observable or even usable by the controller.  This can happen if some inputs are noisy or cannot be measured reliably enough.  Therefore, it makes sense to partition the input variables into two disjoint sets $\CI$ and $\CU$, representing observable and unobservable
inputs, respectively.  Let $\CO$ be the set of output variables.  In general, the desired
system behavior is specified by a formula $\varphi \in \LTLf(\CI \uplus \CU \uplus \CO)$.


We use the \emph{terminating transducer} interpretation of controllers, originally introduced by~\citet{bansal.23.atva}.  As observed by~\citet{jacobs.23.tlsf12}, this is a natural interpretation in the context of \LTLf{} synthesis, since we expect a controller to know when a task is completed, and therefore when to stop interacting with the environment. The controller itself may be represented as a deterministic Mealy or Moore machine
with input alphabet $\BB^\CI$, output alphabet $\BB^\CO$, and a distinguished subset of states designated as \emph{terminating states}. With Mealy semantics, the controller can access the past and also current inputs to determine its current output; with Moore semantics, it can access only the past inputs. The controller $M$ induces a function $\rho_M: (\BB^\CI)^*\to(\BB^\CO)$ that maps a history of assignments of observable input variables to the current assignment of output variables. Given an input word $\sigma\in(\BB^\CI)^n$ of length $n$, the controller produces a corresponding word of $n$ output assignments, denoted $\sigma_{\rho_M}\in(\BB^\CO)^n$. If $\sigma$ leads the underlying machine to a terminating state, the system is assumed to stop interacting with the environment after reading $\sigma$.

\noindent{\bfseries Synthesis under full observability:}
To understand when a controller $M$ \emph{realizes} a specification $\varphi$, we first consider the case when all inputs of the environment are observable by the controller, i.e., $\CU = \emptyset$.

\begin{definition}[\cite{bansal.23.atva,jacobs.23.tlsf12}]\label{def:fo-realizability}
A controller $M$ realizes an \LTLf{} specification $\varphi$ if for every word $\sigma\in(\BB^\CI)^\omega$ there exists a position $k \ge 0$ such that (a) $M$ enters a terminating state after reading the input sequence $\sigma(..k)$, and (b) $(\sigma\sqcup\sigma_{\rho_M})(..k)\in\lang(\varphi)$.
\end{definition}

If we view the interaction between the system and the environment as a two-player game, then Definition~\ref{def:fo-realizability} implies the following. For every input sequence produced by the environment, the controller generates an output at each step based on the input history so far. It does so such that, after finitely many steps, the play reaches a terminating state of the transducer.  When the resulting finite interaction sequence satisfies the \LTLf{} specification $\varphi$, the system player chooses to terminate the play.  The first player is chosen according to the desired semantics (Mealy or Moore).

\noindent {\bfseries Synthesis under partial observability:}
If $\CU \neq \emptyset$, the situation becomes more interesting. While the controller $M$ generates its output sequence by observing only the inputs on $\CI$, satisfaction of $\varphi$ depends on the values on $\CI$, $\CO$ and $\CU$. This motivates the following definition of realizability under partial observation~\cite{GiacomoV16}.


\begin{definition}\label{def:po-realizability}
A controller $M$ implementing the function $\rho_M: (\BB^\CI)^* \rightarrow \BB^\CO$ realizes a specification $\varphi \in \LTLf(\CI \uplus \CU \uplus \CO)$ if for any word $\sigma\in(\BB^\CI)^\omega$ there exists a position $k \ge 0$ such that (a) $M$ enters a terminating state after reading the input sequence $\sigma(..k)$, and (b) for every $\sigma_{\CU} \in (\BB^\CU)^k$, the word $\sigma_{\CU} \sqcup (\sigma \sqcup \sigma_{\rho_M})(..k) \in \lang(\varphi)$.
\end{definition}

The alternation of quantifiers in the above definition deserves careful attention. Intuitively, once the controller decides to terminate the play (by entering a terminating state), it must be possible to satisfy $\varphi$ by augmenting $(\sigma \sqcup \sigma_{\rho_M})(..k)$ with any arbitrary sequence of valuations of $\CU$ of length $k+1$. Note that we do not pick a separate $k$ for each word in $(\BB^\CU)^\omega$. Instead, once $k$ is chosen, the same choice works in condition (b) of Definition~\ref{def:po-realizability} for all words in $(\BB^\CU)^k$. This is why, despite the lack of visibility of $\CU$, the outputs generated by the controller allow the system  to win the game, irrespective of how the environment assigns values to $\CU$.

We say a specification $\varphi \in \LTLf(\CI \uplus \CU \uplus \CO)$ is \emph{realizable under full (resp. partial) observability} if there exists a controller $M$ that realizes $\varphi$ in the sense of Definition~\ref{def:fo-realizability}~(resp. Definition~\ref{def:po-realizability}). The synthesis problem for \LTLf{} under full~(resp. partial) observability is the task of algorithmically generating a controller $M$ that realizes $\varphi$.
The computational complexity of both problems is known to be 2EXPTIME-complete \cite{degiacomo.15.ijcai,GiacomoV16}.

Referring to Example~\ref{ex:psi}, if both inputs $i$ and $u$ are observable, the specification is realizable in one step by a Mealy machine controller; however, with only $i$ observable, two steps are needed.  We  discuss this in detail in Section~\ref{sec:impl}.
%

\noindent {\bfseries Some important notions used in later sections:} The notions of formula progression~\cite{xiao.21.aaai,degiacomo.22.ijcai,xiao.25.tosem,duret.25.ciaa}, belief-state automata construction~\cite{GiacomoV16}, and on-the-fly interleaved construction of automata and game-solving~\cite{xiao.21.aaai,degiacomo.22.ijcai,xiao.2024.vmcai,favorito.23.rcra,li.25.ecai,duret.25.ciaa} have been used in earlier work in the context of \LTLf synthesis under full observability.  We introduce non-trivial adaptations of these techniques to work in the presence of partial observability.
Details of these techniques and our adaptations are discussed in Section~\ref{sec:otf}.

Multi-Terminal Binary Decision Diagrams (MTBDDs)~\cite{minato.96.vlsi,fujita.97.fmsd,klarlund.01.tr},
also called Algebraic Decision Diagrams (ADDs)~\cite{bahar.93.iccad,somenzi.15.cudd}, are graph-based symbolic representations of functions $f:\BB^\CP\to\CS$, where $\CP$ is a set of Boolean propositions and $\CS$ is a set of labels. We use MTBDDs to symbolically represent the next-state transition functions of deterministic finite automata (DFA) arising from \LTLf{} formulas.  MTBDD-based representations of DFA, also called MTDFA, have been used by~\citet{zhu.17.ijcai} and~\citet{duret.25.ciaa} for \LTLf{} synthesis under full observability. More details on MTDFAs are given in Section~\ref{sec:impl}.\looseness=-1

\section{On-the-fly Synthesis Framework}\label{sec:otf}

This section presents an on-the-fly approach to \LTLf synthesis under partial observability. We first briefly recall formula progression as a standard technique for constructing DFAs for \LTLf specifications on the fly. We then extend this technique to partial observability by introducing \emph{observable progression}, which incrementally constructs a belief-state DFA, performing universal quantification over unobservable variables at each step. Finally, we show how this construction enables on-the-fly game solving, by interleaving belief-state DFA construction with game exploration.

\subsection{\LTLf-to-DFA via Formula Progression}\label{sec:fp}

On-the-fly DFA construction for an \LTLf formula is based on the idea of formula progression~\cite{xiao.21.aaai,degiacomo.22.ijcai,xiao.25.tosem}. Our formula progression procedure, denoted by $\fp{\cdot,\cdot}$, is adapted from existing progression-based translations to better align with state-of-the-art MTDFA-based implementation of formula progression~\cite{duret.25.ciaa}. Specifically, our adaptation requires $\fp{\cdot,\cdot}$ to return a (formula, flag) pair, rather than a formula alone. This is essential for introducing \emph{observable progression} in the next section, which leverages the MTDFA representation to construct belief-state DFAs on the fly.

The procedure $\fp{\varphi, w}$ takes an \LTLf formula $\varphi$ and an assignment $w \in \BB^{\CP}$, and returns a pair $(\varphi', b) \in \LTLf(\CP) \times \BB$, where $\varphi'$ is an \LTLf formula representing the remaining to be satisfied after reading $w$, and $b$ is a Boolean value indicating whether $\varphi$ is satisfied after reading $w$, assuming the trace terminates at that point. Note that we use the following lifting of Boolean connectives to pairs: for $\odot \in \{\land,\lor,\limplies,\liff,\lxor\}$, we have $(\varphi_1,b_1) \odot (\varphi_2,b_2) = ([\varphi_1\odot\varphi_2]_\propequiv, b_1\odot b_2)$, and $\neg(\varphi,b) = ([\neg\varphi]_\propequiv, \neg b)$.
\begin{align*}
&\fp{\ttrue, w} = (\ttrue, \top) \quad \text{and} \quad \fp{\ffalse, w} = (\ffalse, \bot); \\
&\fp{p, w} =
\begin{cases}
(\ttrue, \top) \quad \text{ if } p \in w,\\
(\ffalse, \bot) \quad \text{ if } p \not\in w;
\end{cases}\\
&\fp{\neg \varphi, w} = \neg \fp{\varphi, w}; \\
&\fp{\varphi_1 \odot \varphi_2, w}
= \fp{\varphi_1, w} \odot \fp{\varphi_2, w}; \\
&\fp{\X \varphi, w} = (\varphi, \top) \quad \text{and} \quad
\fp{\StrongX \varphi, w} = (\varphi, \bot);\\
&\fp{\F\varphi, w}
= \fp{\varphi, w} \vee
\bigl(\F\varphi, \bot \bigr); \\
&\fp{\G\varphi, w}
= \fp{\varphi, w} \wedge
\bigl(\G\varphi, \top\bigr); \\
&\fp{\varphi_1 \U \varphi_2, w}
= \fp{\varphi_2, w} \vee
\bigl(\fp{\varphi_1, w} \wedge (\varphi_1 \U \varphi_2, \bot)\bigr); \\
&\fp{\varphi_1 \R \varphi_2, w}
= \fp{\varphi_2, w} \wedge
\bigl(\fp{\varphi_1, w} \vee (\varphi_1 \R \varphi_2, \top)\bigr).
\end{align*}

In order to build a DFA of an \LTLf formula $\varphi \in \LTLf(\CP)$ via formula progression, we consider $[\varphi]_\propequiv$ as the initial state of the DFA.
Then, for each assignment $w \in \BB^{\CP}$, we compute $\fp{\varphi, w}$ and obtain a progressed formula, which is treated as the successor state of $\varphi$ with respect to the input $w$. By repeating this construction for newly generated formulas, the DFA is built on the fly, generating only the states that are reachable from $\varphi$, denoted by $\reach{\varphi}$, where $\reach{\varphi} = \{[\varphi]_\propequiv\} \cup \{[\varphi']_\propequiv \mid (\varphi', b) = \fp{\varphi, \sigma} \text{ for every } \sigma \in (\BB^\CP)^{+} \}$. Note that the function $\fp{\cdot,\cdot}$ is naturally generalized to finite non-empty words $\sigma \in (\BB^{\CP})^{+}$: $\fp{\varphi,\sigma}=
(\varphi_{|\sigma|-1},b_{|\sigma|-1})   $ with
$(\varphi_i,b_i)=\fp{\varphi_{i-1},\sigma(i)}$ for $0 \leq i < |\sigma|$, and $\varphi_{-1}$ refers to the original formula $\varphi$.
Due to propositional equivalence, this construction produces a doubly exponential DFA, i.e. $2^{2^{O(|\subs(\varphi)|)}}$ states, in the worst case. 

\subsection{Belief-State DFA Construction}


Note that in \LTLf synthesis under partial observability, a subset of variables is unobservable. Therefore, a straightforward way to handle the unobservable variables $\CU$ is projecting them away from the transitions of the DFA over alphabet $\CI \uplus \CO \uplus \CU$. This projection introduces an NFA over alphabet $\CI \uplus \CO$ that must be determinized via subset construction to obtain a belief-state DFA~\cite{GiacomoV16}. In this section, we show how this belief-state DFA construction can be performed on the fly by interleaving formula progression with subset construction.

In the remainder of the section, for simplicity, we only distinguish observable and unobservable variables. Thus, we consider $\CP = \CP_o \uplus \CP_u$, where $\CP_o = \CI \uplus \CO$ is the set of observable (environment and system) variables and $\CP_u = \CU$ is the set of unobservable (environment) variables.

Intuitively, formula progression decomposes an \LTLf formula $\varphi$ into a requirement about the \emph{current} assignment, which can be checked immediately, and a requirement about the \emph{future} that must hold on the yet unavailable suffix of the trace. Under partial observability, however, the current assignment is only partially known, due to the unobservable variables $\CP_u$. As a result, progressing an \LTLf formula with respect to an observable assignment $w_o \in \BB^{\CP_o}$ no longer leads to a single future requirement, but to a set of possible future requirements corresponding to different assignments of the unobservable variables.

Accordingly, given an \LTLf formula $\varphi \in \LTLf(\CP)$, a belief state is a finite set $s = \{\varphi_1, \ldots, \varphi_k\} \subseteq \reach{\varphi}$. In particular, we represent such a belief-state by the formula $\psi_s = \bigwedge_{\varphi_i\in s}\varphi_i$, which naturally captures the universal quantification over unobservable variables induced by subset construction.
It is worth noting that the implementation of belief states by \citet{TV2020} did introduce an additional exponential blowup beyond standard \LTLf-to-DFA construction, resulting in a triply exponential worst-case complexity. However, this additional cost is not inherent to our belief-state approach.
Although belief states are defined as finite sets of formulas, they can be represented as conjunctions of formulas, and identified through propositional equivalence, rather than treated as distinct syntactic sets. We return to this point later and formally show that the resulting belief-state space remains doubly exponential in the size of the original formula.
%
%
For simplicity, we identify belief states $s$ with their conjunctive formula representations $\psi_s$, or $\psi$ when it's clear from the context.


\medskip
\noindent\textbf{Observable Progression.} Let $\psi_s$ be a belief state, and let $w_o \in \BB^{\CP_o}$ be an observable assignment. Intuitively, observable progression takes the current belief state, progresses it with respect to the observable assignment $w_o$, and produces the corresponding successor belief state. To this end, observable progression considers all full assignments $w \in \BB^{\CP}$ that agree with $w_o$ on the observable variables, i.e., $w_o = w|_{\CP_o}$. Formula progression is applied to $\psi_s$ under each such assignment. Since the successor belief state must account for all such possible assignments, we combine the resulting progressions conjunctively. Note that formula progression returns a pair in $\LTLf(\CP)\times\BB$, so to define observable progression, we lift conjunction to such pairs as follows: for a non-empty finite set $X \subseteq \LTLf(\CP) \times \BB$, we define $\bigwedge X = \left(
  \bigwedge_{(\varphi,b)\in X} \varphi,
  \ \bigwedge_{(\varphi,b)\in X} b
\right)$.

\begin{definition}\label{def:obsprog}
The observable progression of $\psi_s$ under $w_o$ is defined as
\[
\fpObs{\psi_s, w_o}
=
\bigwedge \bigl\{ \fp{\psi_s, w} \mid w \in \BB^{\CP},\; w|_{\CP_o} = w_o \bigr\}.
\]
\end{definition}

The following two lemmas show that observable progression is a suitable generalization of formula progression to the setting of partial observability. In particular, observable progression preserves the semantics of \LTLf under unobservable variables as well as the propositional behavior of \LTLf formulas.


\begin{lemma}\label{lem:obs-fp}
Let $\CP=\CP_o\uplus\CP_u$, let $\psi_s$ be a belief state, and $\psi_s'$ the observably progressed formula over $w_o\in\BB^{\CP_o}$, i.e., $(\psi_s',\_) = \fpObs{\psi_s,w_o}$.
Let $\sigma\in(\BB^{\CP})^{+}$ be a full trace and let $i<|\sigma|-1$ be a position such that $\sigma(i)|_{\CP_o}=w_o$.
Then
\[
\sigma,i+1 \models \psi_s'
\quad\text{iff}\quad
\forall w_u\in\BB^{\CP_u}:\ \sigma^{w_u},i \models \psi_s,
\]
where $\sigma^{w_u}$ is obtained from $\sigma$ by changing only position $i$:
$\sigma^{w_u}(i)=w_o\sqcup w_u$ and $\sigma^{w_u}(j)=\sigma(j)$ for all $j\neq i$.
\end{lemma}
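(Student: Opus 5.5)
The plan is to first prove a single-assignment correctness lemma for ordinary formula progression, and then lift it through the conjunction in Definition~\ref{def:obsprog}.

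\emph{Auxiliary claim.} For every formula $\varphi\in\LTLf(\CP)$, every assignment $w\in\BB^{\CP}$ with $(\varphi',b)=\fp{\varphi,w}$, every word $\sigma\in(\BB^\CP)^+$ and every position $i$ with $i<|\sigma|-1$ and $\sigma(i)=w$, we have
\[
\sigma,i\models\varphi \iff \sigma,i+1\models\varphi'.
\]
The flag $b$ plays no role here, since it concerns only the case $i=|\sigma|-1$; the hypothesis $i<|\sigma|-1$ is there precisely to exclude that case. I would prove the claim by structural induction on $\varphi$, following the clauses defining $\fp{\cdot,\cdot}$.

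The base cases are immediate. For $p$, the formula returned is $\ttrue$ or $\ffalse$ according to whether $p\in w$, and position $i+1$ exists. Negation and Boolean connectives follow from the induction hypothesis, together with the fact that propositional equivalence preserves semantics (noted after Definition~\ref{def:propequiv}); this justifies taking representatives $[\cdot]_\propequiv$. For $\X\varphi$ and $\StrongX\varphi$, since $i+1<|\sigma|$, both reduce to $\sigma,i+1\models\varphi$. For $\F,\G,\U,\R$, I would use the one-step unfolding identities, which are valid when $i+1<|\sigma|$; for example,
\[
\sigma,i\models\varphi_1\U\varphi_2 \iff \sigma,i\models\varphi_2 \,\lor\, \bigl(\sigma,i\models\varphi_1 \land \sigma,i+1\models\varphi_1\U\varphi_2\bigr).
\]
I would then apply the induction hypothesis to the immediate subformulas. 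The nontrivial step is to check that evaluating the progressed Boolean combination at $i+1$ matches this unfolding. This holds because the paired connectives act componentwise on the formula part.

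\emph{Main step.} Fix $\psi_s$, $w_o$ and $\sigma$ as in the statement, and write $W=\{w\in\BB^\CP\mid w|_{\CP_o}=w_o\}$. The map $w_u\mapsto w_o\sqcup w_u$ is a bijection from $\BB^{\CP_u}$ onto $W$. By Definition~\ref{def:obsprog},
\[
\psi_s'=\Bigl[\textstyle\bigwedge_{w\in W}\varphi_w\Bigr]_\propequiv,\quad\text{where } (\varphi_w,\_)=\fp{\psi_s,w}.
\]
Therefore $\sigma,i+1\models\psi_s'$ holds iff $\sigma,i+1\models\varphi_w$ for all $w\in W$. Now $\sigma^{w_u}$ agrees with $\sigma$ at every position $j\ge i+1$, so for each $w=w_o\sqcup w_u$ we get $\sigma,i+1\models\varphi_w$ iff $\sigma^{w_u},i+1\models\varphi_w$. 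By the auxiliary claim applied to $\sigma^{w_u}$, whose letter at position $i$ is $w$, this holds iff $\sigma^{w_u},i\models\psi_s$. Quantifying over all $w_u$ gives the lemma.

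Two small facts are needed along the way. The first is that satisfaction at position $j$ depends only on $\sigma(j..)$, which is a routine induction on formulas. The second is that $W$ is nonempty, so the conjunction $\bigwedge$ is well defined; this holds because $\BB^{\emptyset}$ contains exactly one assignment.

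I expect the main obstacle to be the temporal cases of the auxiliary claim. There one must be careful that the pair-lifted connectives, combined with the $[\cdot]_\propequiv$ normalization, still yield a formula whose meaning at $i+1$ is exactly the unfolding, and one must confirm that the flag component never interferes when $i+1<|\sigma|$.
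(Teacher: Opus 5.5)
Your proposal is correct and follows essentially the same route as the paper: split the conjunction defining $\fpObs{\psi_s,w_o}$ over the completions $w_o\sqcup w_u$, apply one-step correctness of formula progression to $\sigma^{w_u}$, and use that $\sigma^{w_u}$ and $\sigma$ share the same suffix from position $i+1$. The only difference is that you prove correctness of formula progression by structural induction, whereas the paper simply invokes it as a known fact.
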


\begin{proof}
    By definition of observable progression,
    \begin{align*}
    \fpObs{\psi_s,w_o}
    & =
    \bigwedge\bigl\{\,\fp{\psi_s,w}\ \bigm|\ w\in\BB^{\CP},\ w|_{\CP_o}=w_o\,\bigr\} \\
    & =
    \bigwedge\bigl\{\,\fp{\psi_s,\,w_o\sqcup w_u}\ \bigm|\ w_u\in\BB^{\CP_u}\,\bigr\}.
    \end{align*}

	For each $w_u\in\BB^{\CP_u}$, let $(\psi_{w_u},b_{w_u})$ be the result of progressing $\psi_s$ under the assignment $w_o\sqcup w_u$, that is, $(\psi_{w_u},b_{w_u})=\fp{\psi_s,\,w_o\sqcup w_u}$. By construction, the formula $\psi'$ returned by $\fpObs{\psi_s,w_o}$ is the conjunction of all formulas $\psi_{w_u}$ over $w_u\in\BB^{\CP_u}$. Therefore, we need to show
    \begin{equation}
        \sigma,i+1 \models \psi'_s
        \quad\text{iff}\quad
        \forall w_u\in\BB^{\CP_u}:\ \sigma,i+1 \models \psi_{w_u}.
        \tag{1}
    \end{equation}

	Now let $w_u\in\BB^{\CP_u}$ be arbitrary. By definition of the trace $\sigma^{w_u}$, its value at position $i$ is $w_o\sqcup w_u$. By the correctness of formula progression, we have
    \begin{equation}
        \sigma^{w_u},i \models \psi_s
        \quad\text{iff}\quad
        \sigma^{w_u},i+1 \models \psi_{w_u}.
        \tag{2}
    \end{equation}
    Moreover, the traces $\sigma^{w_u}$ and $\sigma$ share the same suffix starting at $i+1$. Hence, the satisfaction of $\psi_{w_u}$ at
    position $i+1$ is the same on both traces. That is,
    \begin{equation}
        \sigma^{w_u},i+1 \models \psi_{w_u}
        \quad\text{iff}\quad
        \sigma,i+1 \models \psi_{w_u}.
        \tag{3}
    \end{equation}

    Combining (2) \& (3), we obtain that for every $w_u\in\BB^{\CP_u}$,
    \[
    \sigma^{w_u},i \models \psi_s
    \quad\text{iff}\quad
    \sigma,i+1 \models \psi_{w_u}.
    \]
    Taking universal quantification over all $w_u\in\BB^{\CP_u}$ and using
    (1), we conclude that
    \[
    \sigma,i+1 \models \psi'_s
    \quad\text{iff}\quad
    \forall w_u\in\BB^{\CP_u}:\ \sigma^{w_u},i \models \psi_s.\qedhere
    \]
\end{proof}

\begin{lemma}\label{lem:obs-propequiv}
	Let $\varphi$ and $\psi$ be two \LTLf formulas over $\CP$ such that $\varphi \propequiv \psi$, and let $\varphi'$ and $\psi'$ be their respective observably progressed formulas over $w_o\in\BB^{\CP_o}$, i.e., $(\varphi',\_) = \fpObs{\varphi,w_o}$ and $(\psi',\_) = \fpObs{\psi,w_o}$. Then
    \[
    \varphi' \propequiv \psi'.
    \]
\end{lemma}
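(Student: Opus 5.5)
The plan is to reduce the statement to the analogous fact for ordinary formula progression, and then lift it through the conjunction in Definition~\ref{def:obsprog}. The core claim is: if $\varphi\propequiv\psi$, then for every full assignment $w\in\BB^{\CP}$, writing $(\varphi_w,b_w)=\fp{\varphi,w}$ and $(\psi_w,c_w)=\fp{\psi,w}$, we have $\varphi_w\propequiv\psi_w$ (and also $b_w=c_w$). Once this holds, $\varphi'=\bigwedge_{w_u}\varphi_{w_o\sqcup w_u}$ and $\psi'=\bigwedge_{w_u}\psi_{w_o\sqcup w_u}$. Propositional equivalence is a congruence for Boolean connectives, since the maximal temporal subformulas of a conjunction are those of its conjuncts. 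Hence $\varphi'\propequiv\psi'$ conjunct by conjunct, and the canonical representatives $[\cdot]_\propequiv$ taken along the way do not matter.

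To prove the core claim, I would view $\fp{\cdot,w}$ as acting homomorphically on the Boolean skeleton. By the definition of $\fp{}$, Boolean connectives are mapped to the lifted connectives on pairs. So $\fp{\varphi,w}$ is obtained from $\varphi_P$ by substituting, for each propositional variable $x_\theta$ (where $\theta$ is a maximal temporal subformula or an atom $p$, $\ttrue$, $\ffalse$), the pair $\fp{\theta,w}=(\theta_w,\beta_\theta)$. Formally, this is a structural induction on $\varphi$ showing $\fp{\varphi,w}=\varphi_P[x_\theta\mapsto \fp{\theta,w}]$ evaluated with the lifted connectives, up to $\propequiv$ in the first component.

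Now suppose $\varphi_P\equiv\psi_P$ as Boolean formulas. I would treat atoms $p$ as propositional variables too, or equivalently note that they get replaced by constants. Substituting the same pair for the same variable in two equivalent Boolean formulas yields equivalent results, and this happens componentwise. In the second component, the substituted values are Booleans, so $b_w=c_w$. In the first component, we substitute the LTLf formulas $\theta_w$. The propositional skeleton of a Boolean combination of the $\theta_w$ is the corresponding substitution instance of the skeleton, in which each $(\theta_w)_P$ is plugged into $x_\theta$. Substitution instances of semantically equivalent Boolean formulas are semantically equivalent, so $\varphi_w\propequiv\psi_w$.

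The main obstacle is the bookkeeping in the substitution step. One issue is that a temporal subformula common to $\varphi$ and $\psi$ must be represented by the same variable; this holds because the variable $x_\theta$ is indexed by the subformula $\theta$ itself. The other issue is that the progressed formulas $\theta_w$ (e.g.\ $\fp{\F\chi,w}$ contains $\F\chi$ and the progression of $\chi$) introduce new maximal temporal subformulas, so the skeleton of the result must be computed as a composition of substitutions. I would handle this with a small lemma: if $\alpha\equiv\beta$ as Boolean formulas over variables $x_1,\dots,x_n$, then $\alpha[x_j\mapsto\gamma_j]\propequiv\beta[x_j\mapsto\gamma_j]$ for any LTLf formulas $\gamma_j$. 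This follows since $(\alpha[x_j\mapsto\gamma_j])_P=\alpha[x_j\mapsto(\gamma_j)_P]$ and Boolean equivalence is closed under substitution.
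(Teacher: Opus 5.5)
Your proposal is correct and follows the same route as the paper. Both arguments reduce the lemma to the fact that, for every full assignment $w$ extending $w_o$, the first components of $\fp{\varphi,w}$ and $\fp{\psi,w}$ are propositionally equivalent, and then lift this through the conjunction of Definition~\ref{def:obsprog} using that $\propequiv$ is a congruence for $\land$. The only difference is that the paper simply asserts that formula progression preserves propositional equivalence, whereas your substitution argument proves it: the skeleton of $\fp{\varphi,w}$ is $\varphi_P$ with each $x_\theta$ replaced by $(\theta_w)_P$, and Boolean equivalence is closed under uniform substitution.
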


\begin{proof}
By definition of observable progression,
    \[
    \fpObs{\varphi,w_o}
    =
    \bigwedge\bigl\{\,\fp{\varphi,w}\ \bigm|\ w\in\BB^{\CP},\ w|_{\CP_o}=w_o\,\bigr\},
    \]
    and
    \[
    \fpObs{\psi,w_o}
    =
    \bigwedge\bigl\{\,\fp{\psi,w}\ \bigm|\ w\in\BB^{\CP},\ w|_{\CP_o}=w_o\,\bigr\}.
    \]

    Since $\varphi \propequiv \psi$ and formula progression preserves
    propositional equivalence, for every $w\in\BB^{\CP}$ we
    have $\fp{\varphi,w}\propequiv \fp{\psi,w}.$
    For each
    $w\in\BB^{\CP}$, write $(\varphi_w,\_) = \fp{\varphi,w}$ and
    $(\psi_w,\_) = \fp{\psi,w}$, and we have that $\varphi_w \propequiv \psi_w$.
%
    Therefore, since $\varphi_w \propequiv \psi_w$ for every $w\in\BB^{\CP}$ with $w|_{\CP_o}=w_o$, and propositional equivalence is preserved under conjunction, it follows that 
    \[
    \bigwedge_{w|_{\CP_o}=w_o} \varphi_w
    \;\propequiv\;
    \bigwedge_{w|_{\CP_o}=w_o} \psi_w.
    \]
    Hence, $\varphi' \propequiv \psi'$, concluding the proof.
\end{proof}

Let $\varphi \in \LTLf(\CP)$ be an \LTLf formula such that $\CP = \CP_o \uplus \CP_u$. We construct a belief-state DFA for $\varphi$ by using observable progression to obtain the state space and transition function. Intuitively, each state of the automaton represents a belief state reachable from the initial specification $\varphi$ with some sequence of observable assignments $\sigma_o \in (\BB^{\CP_o})^+$. Progressing on an observable assignment $w_o \in \BB^{\CP_o}$ transits from the current belief state to its successor by applying observable progression, which accounts for all full assignments $w \in \BB^\CP$ consistent with $w_o$. Acceptance is checked by the Boolean value returned by observable progression. We generalize \LTLf observable progression from single instance to finite traces by defining $\fpObs{\varphi, \sigma_o\cdot w_o} = \fpObs{\fpObs{\varphi, \sigma_o}, w_o}$, where $w_o \in \BB^{\CP_o}$ and $\sigma_o \in (\BB^{\CP_o})^+$ and
formalize the belief-state DFA construction as follows.

\begin{definition}[\LTLf to Belief-State DFA]\label{def:ltlf2bdfa}
    Given an \LTLf formula $\varphi \in \LTLf(\CP)$, where $\CP = \CP_o \uplus \CP_u$, let ${\CA^B_{\varphi}} = \langle \CS, \BB^{\CP_o}, \iota^B, \delta^B, \CT^B \rangle$ be a belief-state DFA by setting $\CS = \reachB{\varphi}$, where $\reachB{\varphi} = \{[\varphi]_\propequiv\} \cup \{[\psi']_\propequiv \mid (\psi',\_) = \fpObs{\varphi, \sigma_o}, \sigma_o \in (\BB^{\CP_o})^{+}\}$, $\iota^B = [\varphi]_\propequiv$, $\delta^B: \CS \times \BB^{\CP_o} \to \CS $ is such that $\delta^B(s, w_o)= s'$, where $(s',\_) = \fpObs{s,w_o}$ for $s \in \CS$ and $\CT^B = \{(s, w_o) \in \CS \times \BB^{\CP_o} \mid (\_,b) = \fpObs{s,w_o} \text{ and } b = \top \}$.
\end{definition}


The following theorem shows the correctness of the construction in Definition~\ref{def:ltlf2bdfa}.
\begin{theorem}\label{thm:bdfaCorrectness}
	Let $\varphi \in \LTLf(\CP)$ with $\CP=\CP_o\uplus\CP_u$, and let
	$\CA_\varphi^B$ be the belief-state DFA constructed in
	Definition~\ref{def:ltlf2bdfa}. Then for every observable word
	$\sigma_o \in (\BB^{\CP_o})^{+}$,
	\begin{align*}
	    \sigma_o \in & \lang(\CA_\varphi^B)
	\quad \text{iff} \\
	& \forall \sigma \in (\BB^{\CP})^{+} \text{ such that }\sigma|_{\CP_o}=\sigma_o \text{ we have that } \sigma \models \varphi.
	\end{align*}
\end{theorem}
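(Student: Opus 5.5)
We prove the statement by induction on the length of $\sigma_o$. The induction hypothesis has to say more than the theorem: it must describe what the current belief state means semantically, not only whether the word is accepted. Write $\sigma_o = w_0 w_1\cdots w_{n-1}$, set $(\psi_{-1},\_)=(\varphi,\_)$, and let $(\psi_k,b_k)=\fpObs{\psi_{k-1},w_k}$ for $0\le k<n$. By Lemma~\ref{lem:obs-propequiv}, replacing $\psi_k$ by $[\psi_k]_\propequiv$ changes neither the subsequent progressions nor the flags up to $\propequiv$. So the run of $\CA^B_\varphi$ on $\sigma_o$ visits exactly the states $[\psi_k]_\propequiv$, and $\sigma_o$ is accepted iff $b_{n-1}=\top$.

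The invariant I would prove is the following. For every $k<n$ and every full word $\tau$ of length greater than $k+1$ whose observable part agrees with $\sigma_o$ on positions $0..k$:
\[
\tau,k+1\models\psi_k \iff \forall \tau'\ \bigl(\tau'|_{\CP_o}(..k)=\sigma_o(..k),\ \tau'(k+1..)=\tau(k+1..)\bigr):\ \tau',0\models\varphi .
\]
In words, $\psi_k$ holds on the future exactly when $\varphi$ holds for every way of filling in the unobservable values of the past.

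The base case $k=0$ is Lemma~\ref{lem:obs-fp} applied at position $i=0$. For the step from $k-1$ to $k$, apply Lemma~\ref{lem:obs-fp} to $\psi_{k-1}$ at position $i=k$: $\tau,k+1\models\psi_k$ iff $\tau^{w_u},k\models\psi_{k-1}$ for all $w_u$. The induction hypothesis then applies to each $\tau^{w_u}$, since these words share the suffix from $k$ onward and only the unobservable values at earlier positions range freely. Combining the two universal quantifiers gives the claim, because choosing $w_u$ at position $k$ independently of the earlier choices is exactly quantifying over all unobservable completions of positions $0..k$.

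At the last letter there is no position $n$, so Lemma~\ref{lem:obs-fp} no longer applies. Instead I need an analogue for the flag: $b_{n-1}=\top$ iff $\tau,n-1\models\psi_{n-2}$ for all $\tau$ of length $n$ whose last observable letter is $w_{n-1}$. This requires the standard correctness property of the flag of $\fp{\cdot,\cdot}$: for $(\varphi',b)=\fp{\psi,w}$, we have $b=\top$ iff $\psi$ holds on the one-letter word $w$, and hence iff $\psi$ holds at the last position of any word ending in $w$. I would establish this by structural induction over the clauses defining $\fp{\cdot,\cdot}$. For example, $\X$ gives $\top$ and $\StrongX$ gives $\bot$, and $\F$ and $\U$ receive $\bot$ from the unrolled part. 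Since the flag of $\fpObs{\cdot,\cdot}$ is the conjunction of the flags over all $w_u$, it follows that $b_{n-1}=\top$ iff $\psi_{n-2}$ holds at position $n-1$ for every unobservable value at that position. Combining this with the invariant at $k=n-2$, or directly with the flag property when $n=1$, yields exactly the theorem.

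I expect the main obstacle to be the bookkeeping for nested universal quantification: showing that iterating the per-position quantification of Lemma~\ref{lem:obs-fp} equals a single quantification over all $\sigma$ with $\sigma|_{\CP_o}=\sigma_o$. This works because satisfaction of $\psi_k$ at position $k+1$ depends only on the suffix from $k+1$, which is also what licenses step (3) in the proof of Lemma~\ref{lem:obs-fp}. Verifying the flag semantics of $\fp{\cdot,\cdot}$ is routine but must be done explicitly, since the paper only gestures at the correctness of formula progression.
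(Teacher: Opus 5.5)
Your proposal is correct and takes essentially the same route as the paper. Both argue by induction along the run, with the invariant that the current belief state holds on the remaining suffix iff $\varphi$ holds for every unobservable completion of the prefix read so far; Lemma~\ref{lem:obs-fp} drives the inductive step, and acceptance at the last letter is handled separately through the flag. Your version is more careful than the paper's, which only says the final step follows by ``unfolding the definition'': you note explicitly that Lemma~\ref{lem:obs-fp} needs $i<|\sigma|-1$ and you prove the one-letter semantics of the flag of $\fp{\cdot,\cdot}$. That semantic flag property is also what actually shows flags are unchanged when states are replaced by their $\propequiv$-representatives, since Lemma~\ref{lem:obs-propequiv} as stated covers only the formula component.
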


\begin{proof}
	Let $\sigma_o = w_o^0\cdots w_o^{n-1}\in(\BB^{\CP_o})^{+}$. Consider the~(unique) run of $\CA_\varphi^B$ on $\sigma_o$:
	\[
	s_0 \xrightarrow{w_o^0} s_1 \xrightarrow{w_o^1}\cdots
	\xrightarrow{w_o^{n-1}} s_n,
	\]
	where $s_0=[\varphi]_\propequiv$ and
	$\delta^B(s_t,w_o^t)=s_{t+1}$ for $0 \leq t \leq n-1$. By Definition~\ref{def:ltlf2bdfa}, this
	means that
	\[
	(s_{t+1},\_) = \fpObs{s_t,w_o^t}.
	\tag{1}
	\]




	For $0 \leq t < n$, let $\tau \in (\BB^{\CP})^{n-t}$ be such that $\tau|_{\CP_o}=w_o^t\cdots w_o^{n-1}$. We need to show that the following holds:
    \begin{align*}
        &\tau \models s_t \iff \\
        &\forall (w_u^0\cdots w_u^{t-1}) \in (\BB^{\CP_u})^t. (\sigma_o(..t-1) \sqcup (w_u^0\cdots w_u^{t-1}))\cdot\tau \models \varphi.
    \end{align*}
%
	For $t=0$, the equivalence holds since $s_0=[\varphi]_\propequiv$.

	Assume the equivalence holds for some $t<n$. Let
	$w_o=w_o^t=\tau(0)|_{\CP_o}$. By (1), we have
	$(s_{t+1},\_) = \fpObs{s_t,w_o}$. Applying
	Lemma~\ref{lem:obs-fp}, we have that
	\[
	\tau(1..) \models s_{t+1}
	\quad\text{iff}\quad
	\forall w_u\in\BB^{\CP_u}:\ \tau^{w_u} \models s_t,
	\]
	where $\tau^{w_u}(1..) = \tau(1..)$ and $\tau^{w_u}(0)=w_o\sqcup w_u$.

    By induction hypothesis, we have that $\tau(1..) \models s_{t+1}$ iff $\forall (w_u^0\cdots w_u^{t-1}) \in (\BB^{\CP_u})^t$ and $\forall w_u \in (\BB^{\CP_u})$
    \[
    (\sigma_o(..t-1) \sqcup (w_u^0\cdots w_u^{t-1}))\cdot (w_o\sqcup w_u) \cdot \tau(1..) \models \varphi
    \]
    which is equivalent to $\forall (w_u^0\cdots w_u^{t}) \in (\BB^{\CP_u})^{t+1}$
    \[
    (\sigma_o(..t) \sqcup (w_u^0\cdots w_u^{t})) \cdot \tau(1..) \models \varphi.
    \]
    Thus, the equivalence holds for $t+1$.


	Finally, by Definition~\ref{def:ltlf2bdfa}, the run on $\sigma_o$ is
	accepting iff the last transition $(s_{n-1},w_o^{n-1})$ is accepting,
	i.e., iff for $(\_,b)=\fpObs{s_{n-1},w_o^{n-1}}$ we have $b=\top$.
	Unfolding the definition of $\fpObs{\cdot}$ and the Boolean value of $\fp{\cdot}$, this
	holds exactly when $\varphi$ is satisfied for all full
	traces consistent with $\sigma_o$. Therefore,
	$\sigma_o\in\lang(\CA_\varphi^B)$ iff every full trace
	$\sigma\in(\BB^{\CP})^{+}$ with $\sigma|_{\CP_o}=\sigma_o$ satisfies
	$\varphi$.
\end{proof}

We now consider the size of the belief-state space induced by observable progression. Although belief states are defined as sets of formulas obtained from formula progression, they are always identified by propositional equivalence through their conjunctive representation. Intuitively, while belief states may be considered as sets, they are represented as single Boolean formulas built from subformulas of $\varphi$. Indeed, there are at most doubly exponentially many propositionally non-equivalent Boolean formulas of this form, thereby avoiding a triply exponential worst-case blowup of~\citet{TV2020}'s construction.
Thanks to propositional equivalence, this new belief-state construction maintains a doubly exponential worst-case bound, as formalized by the following theorem.

\begin{theorem}\label{thm:bdfaSize}
    Let $\varphi \in \LTLf(\CP)$ and let $\CA_\varphi^B$ be the belief-state DFA constructed in Definition~\ref{def:ltlf2bdfa}. Then the state set
    $\CS$ of $\CA_\varphi^B$ is finite and satisfies
    $|\CS| = |\reachB{\varphi}| = O(2^{2^{n}}),$
    where $n = |\subs(\varphi)|$.
\end{theorem}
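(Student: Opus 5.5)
The plan is to show that every belief state in $\reachB{\varphi}$ is, up to propositional equivalence, a Boolean combination of a fixed set of at most $n$ ``atoms'', namely the temporal subformulas of $\varphi$. The bound then follows by counting Boolean functions.

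Concretely, let $\CT_\varphi \subseteq \subs(\varphi)$ be the set of subformulas of $\varphi$ whose primary operator is temporal, together with the operands $\psi$ of $\X\psi$ and $\StrongX\psi$, which are themselves subformulas. Call $\CB_\varphi$ the set of \LTLf formulas built from $\ttrue,\ffalse$ and elements of $\CT_\varphi$ using only Boolean connectives.

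\textbf{Step 1 (closure).} I would show by structural induction on $\chi\in\CB_\varphi$ that for every $w\in\BB^{\CP}$, if $(\chi',\_)=\fp{\chi,w}$, then $\chi'\in\CB_\varphi$ up to $\propequiv$. The cases follow the clauses of $\fp{\cdot,\cdot}$:
\begin{itemize}
\item propositions and constants yield $\ttrue$ or $\ffalse$;
\item $\X\psi$ and $\StrongX\psi$ yield $\psi$;
\item $\F\psi$, $\G\psi$, $\psi_1\U\psi_2$ and $\psi_1\R\psi_2$ yield Boolean combinations of $\fp{\psi_i,w}$ and the formula itself.
\end{itemize}
Applying the induction to the $\psi_i$, which are subformulas of $\varphi$, gives formulas whose maximal temporal subformulas all lie in $\subs(\varphi)$. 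The clean way to state the invariant is therefore: every maximal temporal subformula of $\chi'$ belongs to $\subs(\varphi)$, and the invariant is preserved by $\fp$. Since observable progression is a finite conjunction of such progressions (Definition~\ref{def:obsprog}), and conjunction does not create new maximal temporal subformulas, $\fpObs{\cdot,\cdot}$ also preserves it. By induction on the length of $\sigma_o$, using the inductive definition of $\fpObs{\varphi,\sigma_o}$, every element of $\reachB{\varphi}$ has all its maximal temporal subformulas in $\subs(\varphi)$.

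\textbf{Step 2 (counting).} For any such formula $\chi$, the propositional abstraction $\chi_P$ from Definition~\ref{def:propequiv} is a Boolean formula over the variables $\{x_\psi \mid \psi\in\subs(\varphi)\}$, which number at most $n$. It may also contain atomic propositions at top level, but those can also be counted among $\subs(\varphi)$. Since $\propequiv$ classes correspond exactly to semantic equivalence classes of the $\chi_P$, and there are $2^{2^{n}}$ Boolean functions over $n$ variables, the map $[\chi]_\propequiv \mapsto$ (the function of $\chi_P$) is injective. Hence $|\reachB{\varphi}|\le 2^{2^n}$. Finiteness is immediate.

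\textbf{Main obstacle.} The main difficulty is making the invariant in Step 1 precise. Progression inserts operands such as $\psi$ from $\X\psi$, which can be non-temporal Boolean combinations or atomic propositions rather than temporal formulas. Also, the representative $[\cdot]_\propequiv$ must be chosen so that it contains no new temporal subformulas. I would handle this by phrasing the invariant as ``every maximal temporal subformula and every atom occurring outside temporal operators lies in $\subs(\varphi)$''. I would fix $[\cdot]_\propequiv$ to be a Boolean formula over exactly those abstracted variables, which is always possible by choosing, say, a canonical DNF over the $x_\psi$.
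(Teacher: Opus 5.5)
Your proposal is correct and follows the same route as the paper: show that (observable) progression only ever produces Boolean combinations of subformulas of $\varphi$, then count propositional-equivalence classes as Boolean functions over at most $n$ abstraction variables. Your version is more careful than the paper's sketch, and the argument is essentially the same; the one difference is that you secure closure by fixing a canonical representative for $[\cdot]_\propequiv$, whereas the paper appeals to progression respecting $\propequiv$ (Lemma~\ref{lem:obs-propequiv}).
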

\begin{proof}
    By construction, every belief state is represented by a formula obtained as a conjunction of formulas produced by formula progression starting from $\varphi$. Formula progression introduces only Boolean combinations of subformulas of $\varphi$. Since $\subs(\varphi)$ is finite with $|\subs(\varphi)|=n$, there are at most $2^{2^{n}}$ propositional equivalence classes of such Boolean formulas. Therefore, the set of reachable belief states, considering propositional equivalence~(Lemma~\ref{lem:obs-propequiv}), is finite and bounded by $O(2^{2^{n}})$.
\end{proof}

A straightforward way of using our belief-state DFA construction technique would be solving a reachability game on a fully built belief-state DFA, as done by~\citet{TV2020}. However, this cannot avoid the worst-case doubly exponential blowup. In the next section, we show how to integrate on-the-fly game solving to the belief-state DFA construction, enabling a unified on-the-fly approach to \LTLf synthesis under partial observability.

\subsection{On-the-fly Synthesis}\label{sec:otf-game}

For \LTLf synthesis under full observability, on-the-fly synthesis techniques have proven to be highly effective~\cite{xiao.21.aaai,degiacomo.22.ijcai,xiao.2024.vmcai,favorito.23.rcra,li.25.ecai,duret.25.ciaa}. In this setting, the DFA for the \LTLf specification is constructed incrementally via formula progression and explored only as needed during game solving, avoiding the upfront construction of the full automaton.

Our belief-state construction enables the same on-the-fly technique to be applied to \LTLf synthesis under partial observability. Since belief states and transitions are generated on demand via observable progression, the belief-state DFA does not need to be built upfront. Instead, belief-state construction and game solving can be tightly interleaved, exploiting only those belief states that are actually visited during the reachability game.

\section{Implementation}\label{sec:impl}


The Spot library~\cite{duret.22.cav} contains a tool for \LTLf-synthesis called \texttt{ltlfsynt}\footnote{\url{https://spot.lre.epita.fr/ltlfsynt.html}}~\cite{duret.25.ciaa}.  Our implementation is an extension of \texttt{ltlfsynt} that adds the option \texttt{--unobservable-ins}, which is publicly available since version 2.15 of Spot.
We first recall how that
tool does synthesis with \emph{full observability} before discussing
our changes.

In \texttt{ltlfsynt}, the \LTLf{} specification is converted into a DFA with transition-based acceptance, represented using MTBDDs: they call that representation an MTDFA.
The states $\CQ\subseteq\LTLf(\CP)$ of an MTDFA are identified with $\LTLf$ formulas denoting the language recognized from these states.  The set of outgoing transitions of a state are represented by an MTBDD with internal nodes labeled by $\CP$ and terminals labeled by $\LTLf(\CP)\times\BB$. In a terminal $(\alpha,b)$, the destination state is $\alpha$, while $b$ indicates whether the automaton can stop upon reaching this terminal (in other words, $b$ is the acceptance of the transition leading to $\alpha$).

Concretely, outgoing transitions are computed using the function
$\tr: \LTLf(\CP)\to \MTBDD(\CP,\LTLf(\CP)\times\BB)$ defined
inductively as follows, where we represent a terminal labeled by
$(\alpha,\bot)$ (resp. $(\alpha,\top)$) as
\tikz[baseline=(a.base)]\node[termn](a){$\alpha$};
(resp. \tikz[baseline=(a.base)]\node[terma](a){$\alpha$};).
{\small
\begin{align*}
  \tr(\ffalse)&=\tikz[baseline=(0.base)]\node[termn](0){$\ffalse$};&
  \tr(\StrongX\alpha)&=\tikz[baseline=(a.base)]\node[termn](a){$\alpha$};\\
  \tr(\ttrue)&=\tikz[baseline=(a.base)]\node[terma](a){$\ttrue$};&
  \tr(\X\alpha)&=\tikz[baseline=(0.base)]\node[terma](0)  {$\alpha$};\\
  \tr(p)&=\begin{tikzpicture}[baseline=(p.base)]\node[inode](p) at (0,0) {$p$};
    \node[termn,overlay](0) at (-.4,-.8) {$\ffalse$};
    \node[terma](1) at (.4,-.8) {$\ttrue$};
    \draw[low](p) -- (0);
    \draw[high](p) -- (1);
  \end{tikzpicture}\text{~for~}p{\,\in\,}\CP &
\tr(\lnot \alpha)&=\lnot\tr(\alpha)\\
\tr(\alpha\odot\beta)&=\tr(\alpha)\odot\tr(\beta)\mathrlap{\text{~for any~}\odot\in\{\land,\lor,\limplies,\liff,\lxor\}}\\
  \tr(\alpha\U\beta)&=\tr(\beta)\lor(\tr(\alpha)\!\land\!\tikz[baseline=(0.base)]\node[termn](0){$\alpha\U\beta$};) &
  \tr(\F \alpha)&=\tr(\alpha)\!\lor\!\tikz[baseline=(0.base)]\node[termn](0){$\F\alpha$};\\
\tr(\alpha\R\beta)&=\tr(\beta)\land(\tr(\alpha)\!\lor\!\tikz[baseline=(0.base)]\node[terma](0){$\alpha\R\beta$};)&
\tr(\G \alpha)&=\tr(\alpha)\!\land\!\tikz[baseline=(0.base)]\node[terma](0){$\G\alpha$};
\end{align*}}

\begin{figure}[tbp]
  \centering
\begin{tikzpicture}[xscale=.815,yscale=.9]
  \node[inode] at (-1,-1) (i1) {$i$};
  \node[inode] at (-1.4,-2) (o1) {$o$};
  \node[inode] at (-.6,-2) (o2) {$o$};
  \node[terma] at (-1.8,-3) (t1) {$\!\alpha\kern-1pt_1\vphantom{\beta}\!$};
  \node[termn] at (-1,-3) (t2) {$\!\alpha\kern-1pt_2\vphantom{\beta}\!$};
  \node[terma] at (-.2,-3) (t3) {$\!\alpha\kern-1pt_3\vphantom{\beta}\!$};
  \draw[high] (i1) to[out=-65,in=90] (o2);
  \draw[low] (i1) to[out=-115,in=90] (o1);
  \draw[high] (o1) to[out=-115,in=90] (t1.100);
  \draw[low] (o1) to[out=-65,in=90] (t2.90);
  \draw[high] (o2) to[out=-65,in=90] (t3.90);
  \draw[low] (o2) to[out=-115,in=90] (t1.80);

  \node at (0.2,-2) {$\land$};

  \node[inode] at (1.4,-1) (i2) {$i$};
  \node[inode] at (1,-2) (o3) {$o$};
  \node[inode] at (1.8,-2) (o4) {$o$};
  \node[termn] at (.6,-3) (t4) {$\!\beta_1\!$};
  \node[terma] at (1.4,-3) (t5) {$\!\beta_2\!$};
  \node[terma] at (2.2,-3) (t6) {$\!\beta_3\!$};
  \draw[low] (i2) to[out=-65,in=90] (o4);
  \draw[high] (i2) to[out=-115,in=90] (o3);
  \draw[low] (o3) to[out=-115,in=90] (t4.100);
  \draw[high] (o3) to[out=-65,in=90] (t5.90);
  \draw[low] (o4) to[out=-65,in=90] (t6.90);
  \draw[high] (o4) to[out=-115,in=90] (t4.80);

  \node at (2.6,-2) {$=$};

  \node[inode] at (5.4,-1) (i3) {$i$};
  \node[inode] at (4,-2) (o5) {$o$};
  \node[inode] at (6.8,-2) (o6) {$o$};
  \node[termn] at (3.3,-3) (t7) {$\!\alpha\kern-1pt_1{\land}\beta_1\!$};
  \node[termn] at (4.7,-3) (t8) {$\!\alpha\kern-1pt_2{\land}\beta_3\!$};
  \node[termn] at (6.1,-3) (t9) {$\!\alpha\kern-1pt_1{\land}\beta_1\!$};
  \node[terma] at (7.5,-3) (t10) {$\!\alpha\kern-1pt_3{\land}\beta_2\!$};
  \draw[low] (i3) to[out=-115,in=90] (o5);
  \draw[high] (i3) to[out=-65,in=90] (o6);
  \draw[low] (o5) to[out=-65,in=90] (t8);
  \draw[high] (o5) to[out=-115,in=90] (t7);
  \draw[low] (o6) to[out=-115,in=90] (t9);
  \draw[high] (o6) to[out=-65,in=90] (t10);
\end{tikzpicture}
\caption[Conjunction of two MTBDDs]{Conjunction of two MTBDDs with \LTLf terminals.
  As typical in BDD-based representations,
    a node \begin{tikzpicture}[baseline=(i.base)]
      \node[inode] at (0,0) (i) {$x$};
      \node[overlay] at (0.75,0.15) (l) {$\ell$};
      \node[overlay] at (0.75,-0.15) (h) {$h$};
      \draw[low,overlay] (i) -> (l);
      \draw[high] (i) -> (h);
    \end{tikzpicture}~~~~should be read as ``if $x$ then $h$ else $\ell$''.\label{fig:MTBDDop}}
\end{figure}


Boolean operators that appear to the right of the equal sign are
applied to MTBDDs using algorithms similar to those of BDDs, with the
difference that $\LTLf$-labeled terminals are combined to form new
terminals:
$(\alpha,a)\odot(\beta,b) = ([\alpha\odot\beta]_\propequiv,a\odot b)$, as
done with formula progressions~(Section~\ref{sec:fp}).
Figure~\ref{fig:MTBDDop} shows an example of a conjunction of two
MTBDDs, that we shall reuse later.

\begin{figure}[tbp]
  \centering
\begin{tikzpicture}
  \node[root] at (0.5,-0.4) (r1) {$\psi_1$};
  \node[root] at (2,-0.4) (r2) {$\Psi\vphantom{\psi_2}$};
  \node[left=5mm of r2] (init) {};
  \draw[->](init) -- (r2);
  \node[root] at (4,-0.4) (r3) {$\psi_2$};
  \node[inode] at (.8,-1.3) (u1) {$u$};
  \node[inode] at (2,-1.3) (u2) {$u$};
  \node[inode] at (4,-1.3) (u3) {$u$};
  \node[inode] at (.2,-2.2) (i1) {$i$};
  \node[inode] at (.8,-2.2) (i2) {$i$};
  \node[inode] at (1.4,-2.2) (i3) {$i$};
  \node[inode] at (2,-2.2) (i4) {$i$};
  \node[inode] at (3.4,-2.2) (i5) {$i$};
  \node[inode] at (4,-2.2) (i6) {$i$};
  \node[inode] at (-.3,-3) (o1) {$o$};
  \node[inode] at (.3,-3) (o2) {$o$};
  \node[inode] at (.9,-3) (o3) {$o$};
  \node[inode] at (1.5,-3) (o4) {$o$};
  \node[inode] at (2.1,-3) (o5) {$o$};
  \node[inode] at (2.7,-3) (o6) {$o$};
  \node[inode] at (3.3,-3) (o7) {$o$};
  \node[inode] at (4.1,-3) (o8) {$o$};
  \node[terma] at (0,-4) (t1) {$\psi_1$};
  \node[termn] at (1,-4) (t2) {$\psi_2$};
  \node[termn] at (2,-4) (t3) {$\psi_1$};
  \node[terma] at (3,-4) (t4) {$\,\ttrue\,\vphantom{\psi_2}$};
  \node[terma] at (4,-4) (t5) {$\psi_2$};

  \draw[rlink] (r1) to[out=-90,in=120] (u1);
  \draw[rlink] (r2) to[out=-90,in=90] (u2);
  \draw[rlink] (r3) to[out=-90,in=90] (u3);

  \draw[low] (u1) to[out=-110,in=90] (i2);
  \draw[high] (u1) to[out=-70,in=90] (i3);
  \draw[low] (u2) to[out=-110,in=90] (i1);
  \draw[high] (u2) to[out=-70,in=90] (i4);
  \draw[low] (u3) to[out=-110,in=90] (i5);
  \draw[high] (u3) to[out=-70,in=90] (i6);

  \draw[low] (i1) to[out=-110,in=90] (o1);
  \draw[high] (i1) to[out=-70,in=100] (o2);
  \draw[low] (i2) to[out=-70,in=90] (o3);
  \draw[high] (i2) to[out=-110,in=80] (o2);
  \draw[low] (i3) to[out=-90,in=90] (o4);
  \draw[high] (i3) to[out=-70,in=100] (o5);
  \draw[low] (i4) to[out=-70,in=90] (o6);
  \draw[high] (i4) to[out=-90,in=80] (o5);
  \draw[low] (i5) to[out=-110,in=90] (o7);
  \draw[high] (i5) to[out=-70,in=90] ($(o7)+(3mm,0mm)$) to [out=-90,in=90] (t4.69);
  \draw[low] (i6) to[out=-70,in=90] (o8);
  \draw[high] (i6) to[out=-120,in=90] ($(o7)+(4mm,0mm)$) to [out=-90,in=90] (t4.55);

  \draw[wine] (o1) to[out=-110,in=90] (t1.110);
  \draw[wine] (o2) to[out=-70,in=0,looseness=.5] ($(o4)!.5!(t4)$) to[out=0,in=90,looseness=1] (t4.136);
  \draw[wine] (o5) to[out=-50,in=90] (t4.97);
  \draw[wine] (o6) to[out=-70,in=90] (t5.100);

  \draw[low] (o1) to[out=-70,in=90] (t2.100);
  \draw[high] (o1) to[out=-110,in=90] (t1.110);
  \draw[low] (o2) to[out=-110,in=90] (t1.90);
  \draw[high] (o2) to[out=-70,in=0,looseness=.5] ($(o4)!.5!(t4)$) to[out=0,in=90,looseness=1] (t4.136);
  \draw[low] (o3) to[out=-70,in=90,looseness=.65] (t4.124);
  \draw[high] (o3) to[out=-110,in=90] (t1.70);
  \draw[low] (o4) to[out=-70,in=180,looseness=.8] ($(o4)!.5!(t4) + (0,1mm)$) to[out=0,in=90,looseness=.8] (t4.110);
  \draw[high] (o4) to[out=-110,in=90] (t3.110);
  \draw[low] (o5) to[out=-110,in=90] (t3.90);
  \draw[high] (o5) to[out=-50,in=90] (t4.97);
  \draw[low] (o6) to[out=-70,in=90] (t5.100);
  \draw[high] (o6) to[out=-110,in=90] (t3.70);
  \draw[low] (o7) to[out=-130,in=0,looseness=.9] ($(o5)+(0,-3.5mm)$) to[out=-180,in=90] (t2.80);
  \draw[high] (o7) to[out=-90,in=90] (t4.86);
  \draw[low] (o8) to[out=-70,in=90] (t5.80);
  \draw[high] (o8) to[out=-110,in=90] (t4.44);
\end{tikzpicture}
\hfill
\begin{tikzpicture}[automaton,every state/.style={minimum size=6mm,fill=white,inner sep=2pt},
                    every initial by arrow/.append style={overlay}]
    \node[state,initial] (0) {$\Psi$};
    \node[state,accepting,right=13mm of 0] (1) {};
    \path[->] (0) edge node[above,align=center] {$u{\land}\lnot i~/~\lnot o$\\$i{\lor}\lnot u~/~o$}  (1);
  \end{tikzpicture}
\caption[MTDFA]{(left) An MTDFA for the formula $\Psi=\psi_1\land\psi_2$ where $\psi_1=(\G\F u) \limplies \F(i\liff o)$ and $\psi_2=(\G\F \lnot u) \limplies \F(i \lor o)$.  A root \begin{tikzpicture}[baseline=(r.base)]
      \node[root] at (0,0) (r) {$\alpha$};
      \node[inner sep=0] at (.8,0) (b) {$m$};
      \draw[rlink] (r) -> (b);
    \end{tikzpicture} indicates that $\tr(\alpha)=m$.
    If this automaton is interpreted as a game where the environment plays $\CI=\{u,i\}$,
    and the controller plays $\CO=\{o\}$, then the controller has a strategy to reach
    an accepting terminal if it takes the highlighted edges.  (right) The corresponding controller
    as a terminating Mealy machine.
    \label{fig:mtdfa}}
\end{figure}

The different paths in the MTBDD computed by $\tr(\varphi)$ represent
all possible formula progressions for $\varphi$, but the above
efficiently constructs all progressions at once, without having to compute
$\fp{\varphi, w}$ for each $w \in \BB^\CP$ independently.  By
computing such an MTBDD for all $\LTLf$ terminals reachable from
$\varphi$, one can obtain an MTDFA representation as shown in
Figure~\ref{fig:mtdfa} for the formula $\Psi$ of Example~\ref{ex:psi}.

\medskip
Assuming an appropriate ordering of the MTBDD variables $\CP$ (e.g.,
input variables before output variables for Mealy semantics), the
constructed MTDFA can be interpreted directly as a two-player game to
solve the corresponding $\LTLf$ synthesis problem under full
observability~($\CI=\{i,u\}$ and $\CO=\{o\}$).  In this construction,
accepting terminals (such as
\tikz[baseline=(X.base)]\node[terma](X){$\ttrue$}; in
Fig.~\ref{fig:mtdfa}) are the targets of the reachability game, so
their successors do not need to be computed.  Finally, the fact that
$\tr(\cdot)$ computes all successors of one state enables an
on-the-fly construction of the MTDFA while solving the
game~\cite[Algorithm~2]{duret.25.ciaa}.  For instance in
Figure~\ref{fig:mtdfa}, the highlighted winning strategy can already
be found after computing $\tr(\Psi)$, so the algorithm can stop
without exploring the discovered successors.

%

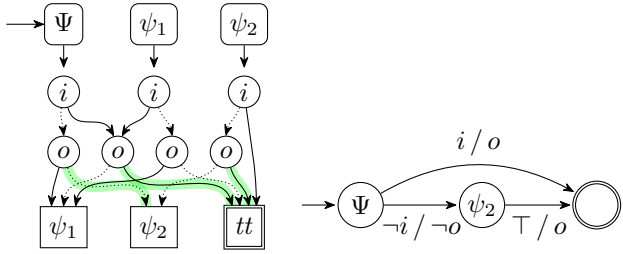
\begin{figure}[tbp]
  \centering
\begin{tikzpicture}[xscale=.6]
  \node[root] at (0,-.3) (r1) {$\Psi\vphantom{\psi_2}$};
  \node[left=5mm of r1] (init) {};
  \draw[->](init) -- (r1);
  \node[root] at (2,-.3) (r2) {$\psi_1$};
  \node[root] at (4,-.3) (r3) {$\psi_2$};
  \node[inode] at (0,-1.2) (i1) {$i$};
  \node[inode] at (2,-1.2) (i2) {$i$};
  \node[inode] at (4,-1.2) (i3) {$i$};

  \draw[rlink] (r1) to[out=-90,in=90] (i1);
  \draw[rlink] (r2) to[out=-90,in=90] (i2);
  \draw[rlink] (r3) to[out=-90,in=90] (i3);

  \node[inode] at (0,-2) (o1) {$o$};
  \node[inode] at (1.2,-2) (o2) {$o$};
  \node[inode] at (2.4,-2) (o3) {$o$};
  \node[inode] at (3.6,-2) (o4) {$o$};

  \node[termn] at (0,-3) (t1) {$\psi_1$};
  \node[termn] at (2,-3) (t2) {$\psi_2$};
  \node[terma] at (4,-3) (t3) {$\ttrue\vphantom{\psi_2}$};

  \draw[wine] (o1) to[out=-70,in=90,looseness=.9] (t2.110);
  \draw[wine] (o2) to[out=-60,in=100,looseness=.7] (t3.120);
  \draw[wine] (o4) to[out=-70,in=90] (t3.78);

  \draw[low] (i1) to[out=-100,in=90] (o1);
  \draw[high] (i1) to[out=-70,in=110] (o2);
  \draw[low] (i2) to[out=-70,in=90] (o3);
  \draw[high] (i2) to[out=-110,in=70] (o2);
  \draw[low] (i3) to[out=-110,in=90] (o4);
  \draw[high] (i3) to[out=-80,in=90] (t3.60);

  \draw[low] (o1) to[out=-70,in=90,looseness=.9] (t2.110);
  \draw[high] (o1) to[out=-110,in=90] (t1.120);

  \draw[low] (o2) to[out=-130,in=90] (t1.90);
  \draw[high] (o2) to[out=-60,in=100,looseness=.7] (t3.120);

  \draw[low] (o3) to[out=-45,in=90] (t3.102);
  \draw[high] (o3) to[out=-130,in=90,looseness=.7] (t1.60);

  \draw[low] (o4) to[out=-135,in=90] (t2.70);
  \draw[high] (o4) to[out=-70,in=90] (t3.78);
\end{tikzpicture}
  \hfill
  \begin{tikzpicture}[automaton,every state/.style={minimum size=6mm,fill=white,inner sep=0pt}]
    \node[state,initial] (0) {$\Psi$};
    \node[state,right=10mm of 0] (1) {$\psi_2$};
    \node[state,accepting,right=9mm of 1] (2) {};
    \path[->] (0) edge[bend left] node[above] {$i\,/\,o$} (2);
    \path[->] (0) edge node[below] {$\lnot i\,/\, \lnot o$} (1);
    \path[->] (1) edge node[below] {$\top\,/\, o$} (2);
  \end{tikzpicture}
\caption[Belief-state MTDFA]{(left) Belief-state MTDFA for $\Psi$, assuming that $\CU=\{u\}$ is unobservable.
  If this automaton is interpreted as a game where the controller plays $\CO=\{o\}$,
  the highlighted edges represent one possible winning strategy. (right) The corresponding controller as a terminating Mealy machine.
  \label{fig:mtdfa-quantified}}
\end{figure}


\medskip
\noindent\textbf{MTDFA-based Observable Progression.}
The partial-observability extension introduces support for partial observability in the above architecture.

To do so, we use an MTBDD operation to universally quantify the
unobservable variables $\CU\subseteq\CP$ in the MTBDD computed by
$\tr(\varphi)$ during on-the-fly exploration of the MTDFA.  If
$\tr(\varphi)$ was a symbolic representation of the set of
progressions of the form $\fp{\varphi, w}$ for $w \in \BB^\CP$,
then $\forall\CU.\tr(\varphi)$ is the symbolic representation
of the set of all observable progressions
$\fpObs{\varphi, w_o}$ for $w_o \in \BB^{\CP_o}$.

The MTBDD $\forall\CU.\tr(\varphi)$ is computed bottom-up using an
algorithm similar to those of BDDs: an internal MTBDD node labeled
with a variable in $\CU$ is replaced by the conjunction of its
children.  For instance, the MTBDD computed for $\tr(\Psi)$ in
Figure~\ref{fig:mtdfa} has its top-level node labeled by $u\in\CU$, so
it should be replaced by the conjunction of its two children.  This is
exactly the operation depicted in Figure~\ref{fig:MTBDDop}, with
$\alpha_1=\alpha_2=\psi_1$, $\beta_1=\beta_3=\psi_2$, and
$\alpha_3=\beta_2=\ttrue$.

Doing this quantification for every state reached during the translation
gives the belief-state MTDFA of Figure~\ref{fig:mtdfa-quantified}.
To make this quantification efficient, we always put all unobservable
variables $\CU$ at the bottom of the variable order: thus all nodes
that should be removed appear in subtrees of the MTBDDs, and their
quantification replaces those subtrees by terminals labeled by the conjunction of their leaves.

\begin{theorem}\label{th:translation}
For $\varphi\in\LTLf(\CP)$, with $\CP=\CP_o\uplus\CU$, let
$\CA^B_\varphi=\langle \CS,\CP_o,\iota^B,\Delta^B\rangle$ be the belief-state MTDFA obtained by setting $\iota^B=[\varphi]_{\propequiv}$, $\Delta^B(s)=\forall\CU.\tr(s)$, and letting $\CS$ be the smallest subset of $\LTLf(\CP)$ such that (i) $\iota^B\in\CS$, and (ii) for every $s\in\CS$ and  every terminal labeled $(\alpha,b)$ in $\Delta^B(s)$,  $\alpha\in\CS$.  Then $\CS$ is finite and $\CA^B_\varphi$ is a symbolic representation of the belief-state DFA defined in Definition~\ref{def:ltlf2bdfa}.
\end{theorem}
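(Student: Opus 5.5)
The plan is to establish two facts. The first is a semantic correspondence at the level of a single state: for every formula $s$ and every observable assignment $w_o\in\BB^{\CP_o}$, the terminal reached in $\forall\CU.\tr(s)$ by following $w_o$ is exactly $\fpObs{s,w_o}$, up to $\propequiv$ on the formula component. The second is that the least fixpoint defining $\CS$ coincides with $\reachB{\varphi}$. Together these show that $\Delta^B$ encodes $\delta^B$ and $\CT^B$ of Definition~\ref{def:ltlf2bdfa}, and that the state sets agree. Finiteness then follows from Theorem~\ref{thm:bdfaSize}.

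For the first fact, I will prove by structural induction on $s$ that for every full assignment $w\in\BB^{\CP}$, the terminal selected in $\tr(s)$ by $w$ equals $\fp{s,w}$. Each clause of $\tr$ mirrors the corresponding clause of $\fp{\cdot,\cdot}$:
\begin{itemize}
\item atoms give the $p$-node with terminals $(\ffalse,\bot)$ and $(\ttrue,\top)$;
\item $\X$ and $\StrongX$ give single terminals with flags $\top$ and $\bot$;
\item $\F$, $\G$, $\U$ and $\R$ combine subresults with the constant terminals exactly as in the progression equations.
\end{itemize}
The key lemma is that the MTBDD apply operation is pointwise. For any assignment $w$, the terminal of $m_1\odot m_2$ at $w$ is the combination of the terminals of $m_1$ and $m_2$ at $w$. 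Since the terminal combination $(\alpha,a)\odot(\beta,b)=([\alpha\odot\beta]_\propequiv,a\odot b)$ is the same lifting used for $\fp{\cdot,\cdot}$, the induction goes through.

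Next, I will show that the quantification is pointwise as well. Replacing a $\CU$-labelled node by the conjunction of its children yields an MTBDD whose terminal at $w_o$ is $\bigwedge\{\text{terminal of }\tr(s)\text{ at }w_o\sqcup w_u\mid w_u\in\BB^{\CU}\}$. This is proved by induction on the number of quantified variables, using the pointwise property of apply. The resulting conjunction is precisely $\fpObs{s,w_o}$ by Definition~\ref{def:obsprog}, modulo reassociation of conjunctions, which is harmless because terminals are taken up to $[\cdot]_\propequiv$.

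For the second fact, I will show that $\CS=\reachB{\varphi}$ modulo $\propequiv$, by two inclusions.
\begin{itemize}
\item $\reachB{\varphi}\subseteq\CS$: induct on the length of $\sigma_o$, using the first fact. Lemma~\ref{lem:obs-propequiv} is needed here so that representatives chosen by $[\cdot]_\propequiv$ progress consistently, making $\delta^B$ well defined on classes.
\item $\CS\subseteq\reachB{\varphi}$: the set $\reachB{\varphi}$ contains $\iota^B$ and is closed under the successor terminals of $\Delta^B$, by the first fact. Since $\CS$ is the least such set, the inclusion follows.
\end{itemize}
Finiteness of $\CS$ then follows from Theorem~\ref{thm:bdfaSize}. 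Finally, a terminal $(\alpha,b)$ reached by $w_o$ in $\Delta^B(s)$ yields $\delta^B(s,w_o)=\alpha$, and $(s,w_o)\in\CT^B$ iff $b=\top$. This is exactly the symbolic representation claimed.

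The main obstacle is making the pointwise semantics of MTBDD apply and quantification rigorous in the presence of canonical representatives. Terminals are stored as $[\cdot]_\propequiv$ representatives, so the obstacle is to verify that $[[\alpha]_\propequiv\land[\beta]_\propequiv]_\propequiv=[\alpha\land\beta]_\propequiv$, and similarly for the other connectives. I would handle this first by noting that $\propequiv$ is a congruence for all Boolean connectives, which is immediate from Definition~\ref{def:propequiv}.
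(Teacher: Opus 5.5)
Your proposal is correct and follows the same route as the paper: evaluating $\forall\CU.\tr(s)$ at $w_o$ gives the conjunction of the progressions $\fp{s,w}$ over all completions $w$ of $w_o$, which is exactly $\fpObs{s,w_o}$, and finiteness comes from Theorem~\ref{thm:bdfaSize}. You also make explicit two steps the paper leaves implicit: the structural induction showing that $\tr(s)$ evaluated at $w$ equals $\fp{s,w}$, and the identification of $\CS$ with $\reachB{\varphi}$ up to $\propequiv$ via Lemma~\ref{lem:obs-propequiv}.
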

\begin{proof}
  Finiteness follows directly from Theorem~\ref{thm:bdfaSize}: $\tr(s)$ contains only Boolean combinations of subformulas of $\varphi$, and the quantification $\forall\CU$ only combines leaves by conjunction. Hence all states are still Boolean combinations of subformulas of $\varphi$, up to propositional equivalence, and there are finitely many of them.

  Correctness follows from the definition of MTBDD universal quantification. For every $s\in\CS$ and $w_o\in\BB^{\CP_o}$, evaluating $\forall\CU.\tr(s)$ under $w_o$ yields the conjunction of the progressions obtained from all completions of $w_o$ with valuations of the unobservable variables: $(\forall\CU.\tr(s))(w_o) = \bigwedge\bigl\{\,\fp{\varphi,w}\ \bigm|\ w\in\BB^{\CP},\ w|_{\CP_o}=w_o\,\bigr\}.$ By definition, this is exactly $\fpObs{s,w_o}$. Therefore, $\Delta^B$ induces the same successors and accepting transitions as the belief-state DFA of Definition~\ref{def:ltlf2bdfa}.
\end{proof}

\noindent\emph{On-the-fly synthesis} is enabled by solving the induced reachability game during the exploration of the belief-state MTDFA, without constructing the full automaton upfront.   On the
example of Figure~\ref{fig:mtdfa-quantified}, an on-the-fly
construction that explores $\psi_2$ before $\psi_1$ would solve the game
before exploring $\psi_1$.  The game-solving procedure was not
changed; it was only applied to the belief states.

\section{Empirical Evaluation}\label{sec:evaluation}


In this section, we evaluate our on-the-fly synthesis approach by
comparing our implementation against state-of-the-art tools for \LTLf
synthesis under partial observability. 
As mentioned in Section~\ref{sec:impl}, our implementation is available in \texttt{ltlfsynt}, starting from Spot~2.15. In the experimental plots and tables, we denote that partial-observability configuration by \spototf.
In addition, to evaluate the practical benefits of integrating the on-the-fly game solving~(see Sec~\ref{sec:otf-game}), we implemented a variant of \spototf, named \spotbsc, in which on-the-fly game solving is disabled. In this configuration, a belief-state MTDFA is first constructed in full, and only afterward a reachability game is solved on the resulting automaton.
All of our experiments were run on a machine with an Intel Xeon Gold 6130 CPU with 2.10GHz, 64 cores and 188.5 GB of RAM. We limited each test to 90 seconds.

\subsection{Baseline Tools}\label{sec:expsyft}


The tools used as baselines in our evaluation are those introduced by~\citet{TV2020}, which implement different approaches.
All tools were built on the symbolic synthesis framework \emph{Syft} for \LTLf under full observability~\cite{zhu.17.ijcai}.
The first approach is belief-state based, implemented in \syftbsc. This approach is similar to belief constructions commonly used in planning~\cite{BonetGeffner2000,HoffmannBrafman2005}, which first builds a deterministic finite automaton~(DFA) for the \LTLf specification and then applies a second subset construction to turn it into a belief-state DFA, with the unobservable variables omitted from the transition function. The second approach is projection based, and implemented in \syftp. It instead starts from a nondeterministic finite automaton~(NFA) and applies a sequence of manipulations to obtain an equivalent belief-state DFA. The third approach is MSO based, implemented in \syftmso. It translates the \LTLf specification into a monadic second-order logic~(MSO) formula with the unobservable variables universally quantified, and then constructs a DFA directly for this MSO formula, which is equivalent to the belief-state DFA obtained by the other two approaches. Since \syftp has been shown to perform worse than \syftbsc and \syftmso in their experimental evaluation, we exclude it from our comparison. Refer to \citet{TV2020} for further details on these approaches.

\subsection{Benchmarks}\label{sec:expbench}
We ran our experiments on several benchmark sets, which we divide into two categories. The first category is the \lucasbench, introduced by~\citet{TV2020} and designed with partial-observability in mind. That is, every instance in this benchmark already contains a set of unobservable environment variables. This category includes \emph{Coin game} (8 instances), \emph{Traveling Target} (9 instances), and \emph{Private Peek}, which consists of 16 sub-categories with 100 instances each~(1600 instances in total).

The second category is {\syntcompbench}, taken from the annual reactive synthesis competition~(SYNTCOMP)~\cite{syntcomp-benchmark-url}.
These benchmarks do not originally contain any unobservable variables. As such, we randomly selected half of the environment variables and made them unobservable. To get a better chance that the unobservable variables affect the overall result, we selected benchmarks from designated two-player games, in which such variables are highly likely to affect the outcome.
In total, we have 150 instances: \emph{Single Counter}~(20 instances), \emph{Double Counter}~(20 instances), \emph{Chomp Game}~(22 instances), and \emph{Nim Game}~(88 instances). We ran each instance 10 times, and in each run randomly selected a fresh set of 50\% environment variables considered unobservable, without repeating any previously chosen set, to reduce the impact of randomness and obtain more robust and representative performance measurements.

\noindent\textbf{Correctness.} The implementation of \spototf and \spotbsc was verified by comparing the results (realizable/unrealizable) returned by them with those from \syftbsc and \syftmso. The results were consistent among all tools in all solved cases.

\subsection{Evaluation Results}\label{sec:expsetting}
\ifappendix
Due to space limitations, we only present overall comparison results here. A detailed comparison on individual benchmark families can be found in the appendix.
\else
Due to space limitations, we only present overall comparison results here. A detailed comparison on individual benchmark families can be found in our extended version on arXiv~\cite{KR26}.
\fi

Figure~\ref{fig:po-cactus} is a cactus plot for \lucasbench, comparing the end-to-end runtime. All benchmark instances are sorted by runtime for each tool. The colors correspond to \spototf (green), \spotbsc (yellow), \syftbsc (blue) and \syftmso (red). Recall that in this category the partial observability is native to the specifications.
%
\begin{figure}[tb]
  \centering \includegraphics[width=0.45\textwidth]{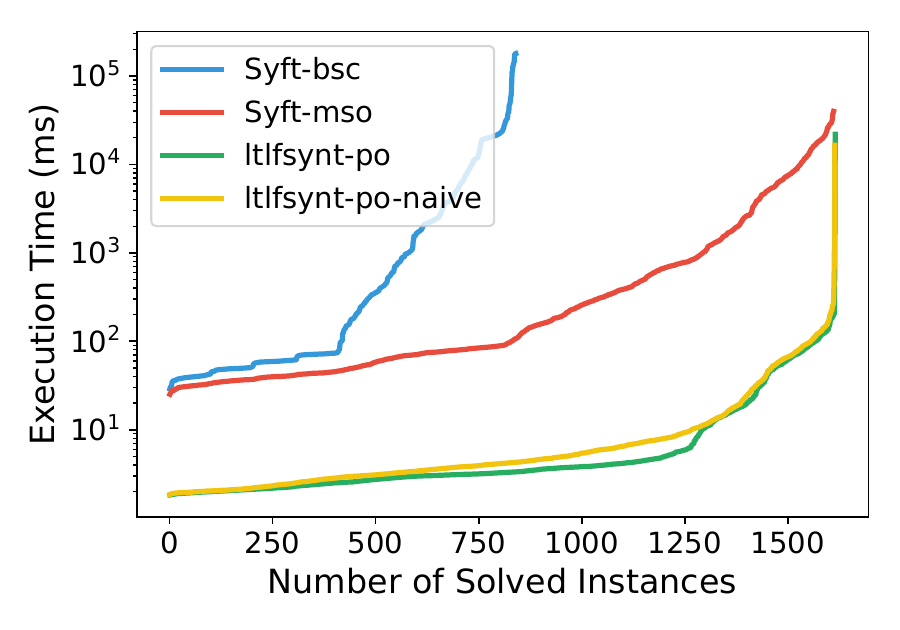}
  \caption{Cactus plot for \lucasbench\label{fig:po-cactus}}
   \includegraphics[width=0.45\textwidth]{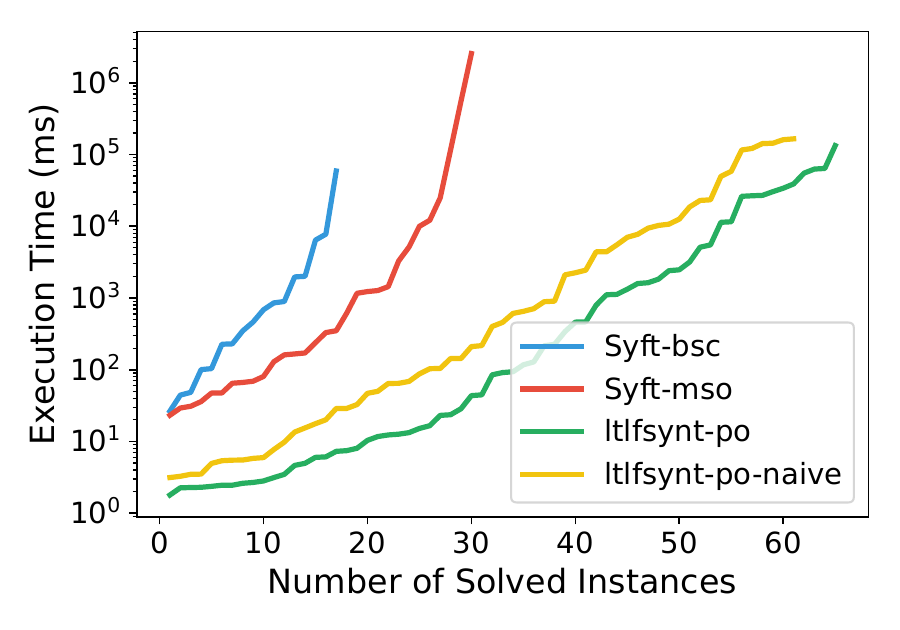}
  \caption{Cactus plot for \syntcompbench\label{fig:syntcomp-cactus}}
\end{figure}
%
%
%
This figure shows a significant performance improvement for \spototf over \syftbsc and \syftmso, with the speedup ranging from 27.65x to 126.6x in average runtime across the three different benchmark families. The speedup is measured by taking the total running time of both tools (\spototf and the best-performing between \syftbsc and \syftmso) for each individual benchmark family and computing their ratio.

Similarly, Figure~\ref{fig:syntcomp-cactus} is a cactus plot for the \syntcompbench, in which the unobservable variables are randomly selected, following the same palette of colors as in Figure~\ref{fig:po-cactus}. Here too, the plot shows a significant performance improvement for \spototf over \syftbsc and \syftmso, with the speedup ranging from 4.81x to 7,892.82x in average runtime across the different benchmark families. 


Overall, \spototf dramatically outperforms current state-of-the-art tools. These results demonstrate the strength of both our theoretical on-the-fly approaches as described in Section~\ref{sec:otf}, and our practical MTDFA-based implementation, as described in Section~\ref{sec:impl}.

We conducted a more detailed comparison to better understand the contribution of the on-the-fly game solving component in \spototf. In general, \spototf consistently outperforms \spotbsc across all benchmarks instances.
%
\begin{table}[tb]
  \centering
  \caption{Combined average speedups of \spototf compared to \spotbsc across all benchmarks.}
  \label{tab:combined-speedup}
  \begin{tabularx}{0.48\textwidth}{Xlr}
    \toprule
    \textbf{Source} & \textbf{Family} & \textbf{Speedup} \\
    \midrule
    \lucasbench    & Coin-Game       & 16.54$\times$ \\
                   & Moving-Target   & 1.22$\times$  \\
                   & Private-Peek    & 1.17$\times$  \\
    \midrule
   \emph{SYNTCOMP-fin Bench.} & Chomp           & 4.62$\times$  \\
                   & Counter         & 4.46$\times$  \\
                   & Counters-Double & 5.11$\times$  \\
                   & Nim             & 4.61$\times$  \\
    \bottomrule
  \end{tabularx}
\end{table}
%
%

Table~\ref{tab:combined-speedup} summarizes the speedup on both benchmark categories. Specifically, for the \lucasbench, we can see that \spototf achieves a speedup ranging from 1.17x to 16.54x over \spotbsc across the individual benchmark families. For the \syntcompbench, the speedups range from 4.46x to 5.11x.
These results indicate that the major improvement in performance is due to our belief-state MTDFA construction techniques. At the same time, they also show that the integration of on-the-fly game solving plays an important role. By solving the reachability game already during the construction of the belief-state MTDFA, \spototf can avoid exploring parts of the automaton that are not needed, which further improves the overall performance.

\section{Conclusion and Future Work}\label{sec:conclusion}

In this paper, we presented an on-the-fly approach to \LTLf synthesis under partial observability based on observable progression. This approach incrementally constructs the belief-state DFA, universally quantifying unobservable environment variables during construction, enabling a tight integration of belief-state DFA construction and game solving. The resulting implementation, leveraging MTDFA representation, shows significant improvements compared to existing approaches. More generally, our results show that MTDFA-based on-the-fly techniques are not limited to the fully-observable setting.
%
Note that our belief-state DFA construction through observable progression essentially builds an automaton incrementally that is equivalent to a formula of the form $\forall \CU. \varphi$, with $\varphi \in \LTLf(\CP)$ and $\CU \subseteq \CP$. This technique can be adapted to support DFA construction for $\exists \CU. \varphi$ by changing universal quantification to existential quantification on the MTBDDs, and more generally for quantified \LTLf formulas~(\QLTLf). Thus our approach can be extended to settings that apply synthesis
to \QLTLf formulas, such as in~\cite{hagemeier2025ltlf}.

\section*{Acknowledgments}
We thank the reviewers for their insightful comments.
SZ and SC are partly supported by the Royal Society International Exchanges grant IES\textbackslash{}R2\textbackslash{}252189. NA is supported by the Open University of Israel MSc Excellence Scholarship and by The Open University of Israel Research Fund 47372.

\section*{AI Declaration}
Generative AI was used for language polishing and experimental script development. All scientific results, proofs, artifacts, and conclusions were independently verified by the authors, who assume full responsibility for the final content of the paper.

\bibliographystyle{kr}
\bibliography{biblio}

\ifappendix
\clearpage
\appendix
\section{Supplementary Evaluation}\label{sec:appEval}
\subsection{Full Benchmark Results}\label{sec:appResults}

We provide detailed results of our experiments. We first discuss the \lucasbench, in which the partial observability is originated in the specification. Then we discuss the \syntcompbench, in which the unobservable variables were randomly selected.

\paragraph{\lucasbench}
Figures~\ref{fig:po-perf-1},~\ref{fig:po-perf-2}, and~\ref{fig:po-perf-3} show the performance of \spototf, \spotbsc, \syftbsc and \syftmso on the \lucasbench. 
\begin{figure}[htp]
  \centering
  \includegraphics[width=0.45\textwidth]{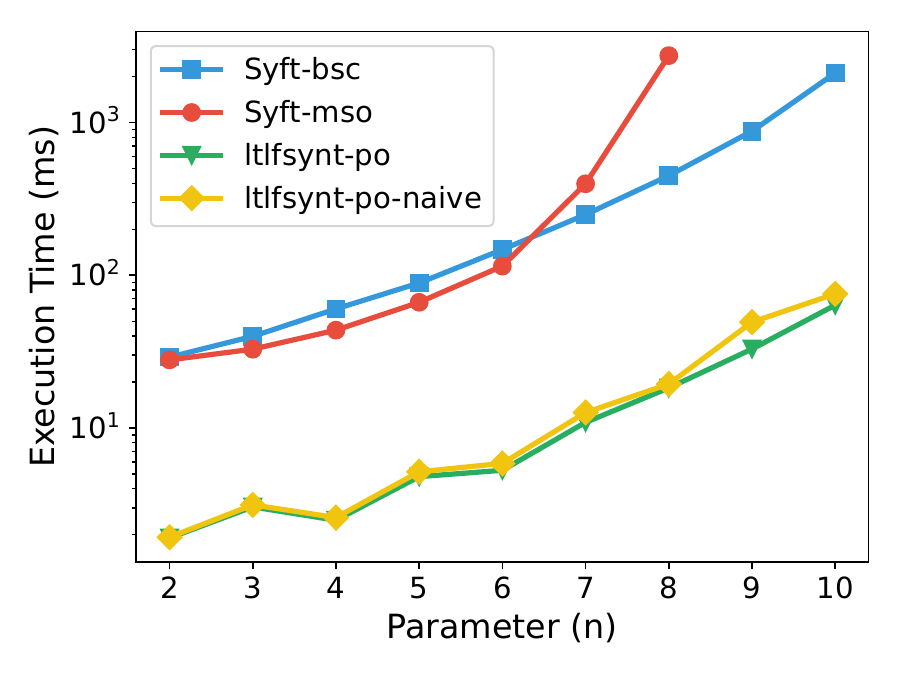}\par\vspace{0.5em}
  \caption{Moving Target Performance}
  \label{fig:po-perf-1}
\end{figure}

\begin{figure}[htp]
  \centering
  \includegraphics[width=0.45\textwidth]{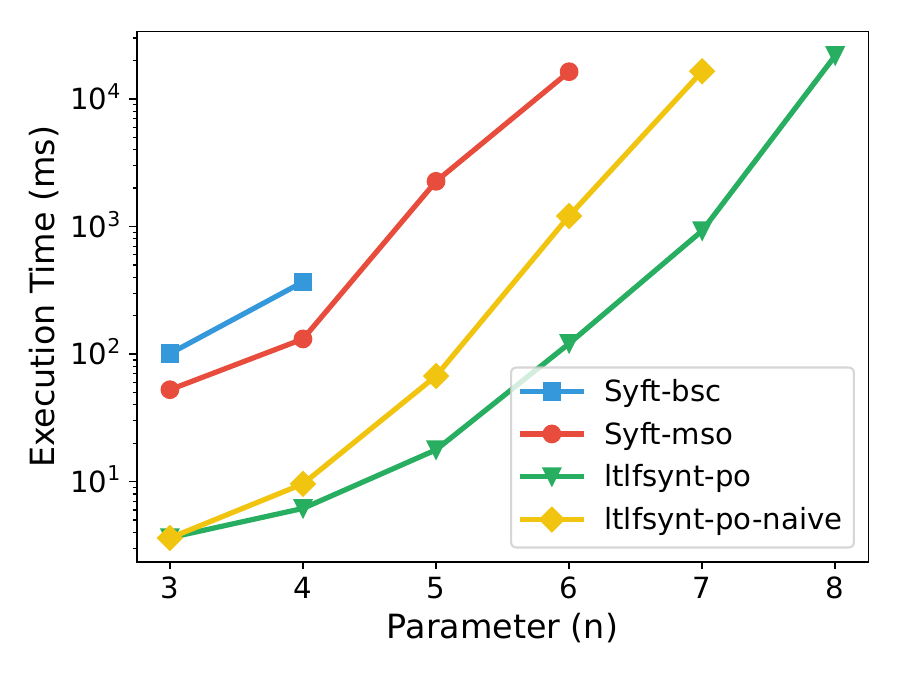}\par\vspace{0.5em}
  \caption{Coin Game Performance}
  \label{fig:po-perf-2}
\end{figure}

\begin{figure}[htp]
  \centering
  \includegraphics[width=0.45\textwidth]{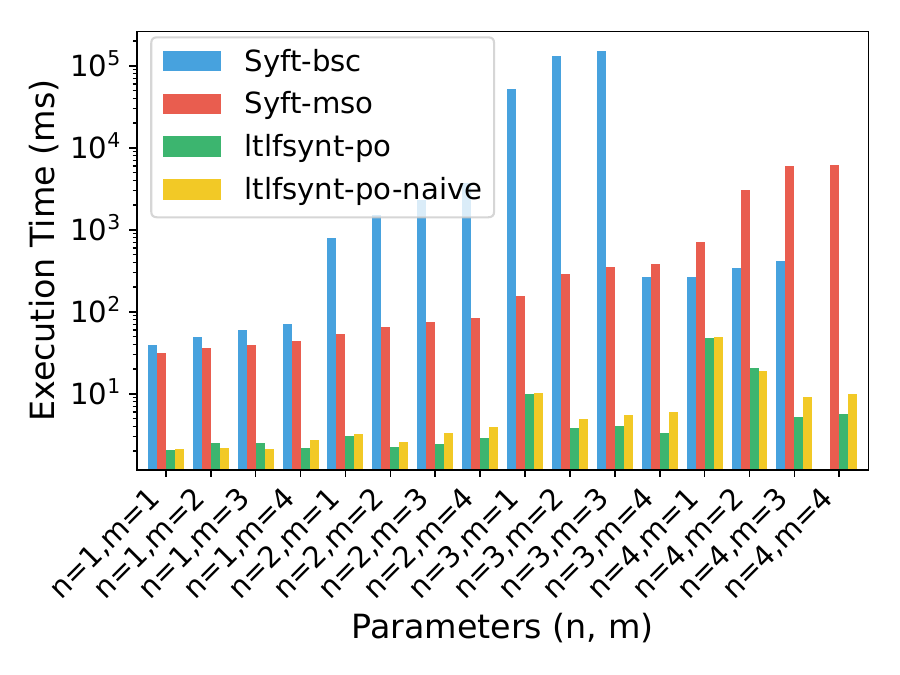}
  \caption{Private Peek Performance}
  \label{fig:po-perf-3}
\end{figure}
\FloatBarrier

Table~\ref{tab:po-speedup} shows the average speedup in each benchmark family. 
The speedup is measured by taking the total running time of both tools~(\spototf and the best-performing one between \syftbsc and \syftmso) for each individual benchmark family and computing their ratio.

\begin{table}[htp]
  \centering
  \begin{tabular}{lr}
    \toprule
    Benchmark Family & Speedup \\
    \midrule
    Coin-Game        & 126.60$\times$ \\
    Moving-Target    & 27.65$\times$ \\
    Private-Peek     & 126.00$\times$ \\
    \bottomrule
  \end{tabular}
  \caption{Average speedups of \spototf compared to the best-performing Syft implementation (\syftbsc or \syftmso) across \lucasbench.}
  \label{tab:po-speedup}
\end{table}


\paragraph{\syntcompbench}

Similarly, Figures~\ref{fig:perf-chomp},~\ref{fig:perf-nim},~\ref{fig:perf-counter} and~\ref{fig:perf-counters-double} show the performance of \spototf, \spotbsc, \syftbsc and \syftmso on the \syntcompbench.


  \begin{figure}[htp]
    \centering
    \includegraphics[width=0.45\textwidth]{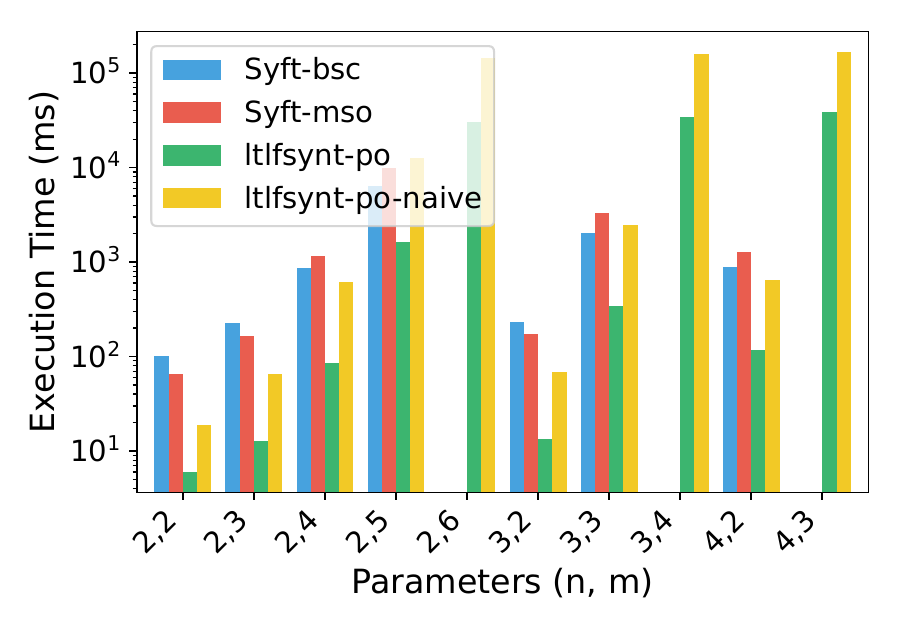}
    \caption{Chomp Performance}
    \label{fig:perf-chomp}
  \end{figure}

  \begin{figure}[htp]
    \centering
    \includegraphics[width=0.45\textwidth]{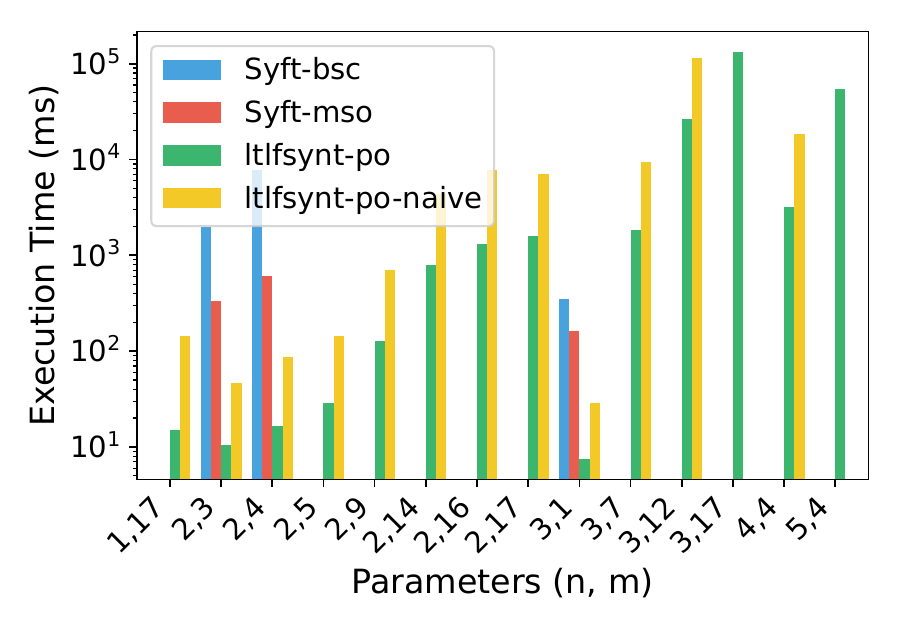}
    \caption{Nim Performance}
    \label{fig:perf-nim}
  \end{figure}

  \begin{figure}[htp]
    \centering
    \includegraphics[width=0.45\textwidth]{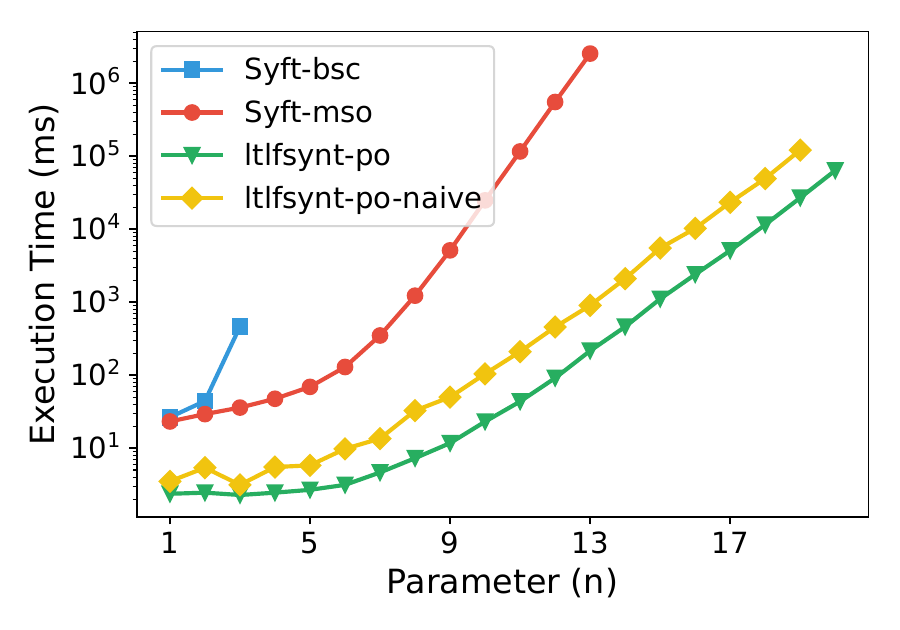}
    \caption{Counter Performance}
    \label{fig:perf-counter}
  \end{figure}

  \begin{figure}[htp]
    \centering
    \includegraphics[width=0.45\textwidth]{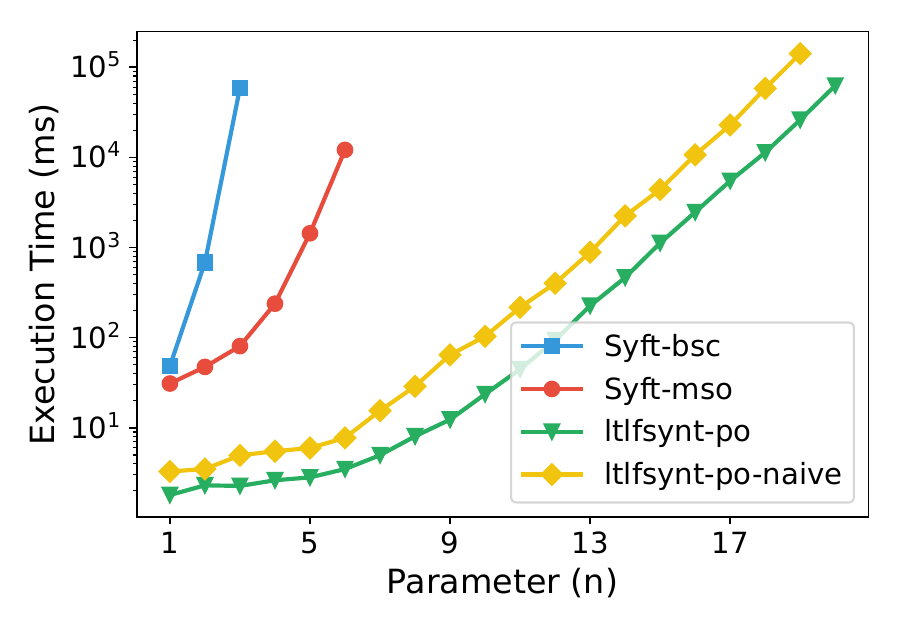}
    \caption{Double Counters Performance}
    \label{fig:perf-counters-double}
  \end{figure}


Table~\ref{tab:syntcomp-speedup} shows a speedup ranging from 4.81x to 7892.82x. It should be noted that both \spototf and \spotbsc solve a lot of benchmarks that the state-of-the-art tools could not solve due to either time constraints or memory constraints.

\begin{table}[htp]
  \centering
  \begin{tabular}{lr}
    \toprule
    Benchmark Family & Speedup \\
    \midrule
    Chomp           & 4.81$\times$ \\
    Counter         & 7,892.82$\times$ \\
    Counters-Double  & 915.84$\times$ \\
    Nim             & 32.10$\times$ \\
    \bottomrule
  \end{tabular}
  \caption{Average speedups of \spototf compared to the best-performing Syft implementation across \syntcompbench (50\% variables hidden). }
  \label{tab:syntcomp-speedup}
\end{table}

\FloatBarrier






\section{Reproducibility}

We describe how to reproduce the results that appear in this paper.

\subsection{Benchmarks}
\paragraph{\lucasbench} can be found online at \url{https://github.com/lucasmt/ltlf-po-benchmarks}
\paragraph{\syntcompbench} can be found online at \url{https://github.com/SYNTCOMP/benchmarks/tree/master/tlsf-fin}
\subsection{Our Tool}
The implementation is publicly available in Spot starting from version~2.15~(available at \url{https://spot.lre.epita.fr/install.html}). After installing Spot, the tool can be invoked as follows:

\paragraph{\spototf}

\texttt{./ltlfsynt} \texttt{<formula>}\\
\texttt{--ins=<inputs>}\\
\texttt{--outs=<outputs>}\\
\texttt{--unobservable-ins=<unobservables>}\\
\texttt{--semantics=<mealy or moore>}


\paragraph{\spotbsc}

\texttt{./ltlfsynt} \texttt{<formula>}\\
\texttt{--ins=<inputs>}\\
\texttt{--outs=<outputs>}\\
\texttt{--unobservable-ins=<unobservables>}\\
\texttt{--semantics=<mealy or moore>} \\
\texttt{--translation=restricted}

\subsection{Baseline Tools}

The version of Syft with support for partial observability can be found online at \url{https://github.com/lucasmt/Syft/tree/double-negation-dfa}. To install, follow the instructions in \url{https://github.com/lucasmt/Syft/blob/double-negation-dfa/INSTALL}. You might need to change the \texttt{CUDD\_ROOT} variable in \url{https://github.com/lucasmt/Syft/blob/double-negation-dfa/CMakeModules/Findcudd.cmake} to point to your local CUDD installation. Afterwards, add the \texttt{build/bin} folder to your path in order to have access to the \texttt{ltlf2fol} and \texttt{Syft} binaries.

\paragraph{\syftbsc} To run \syftbsc on a benchmark \texttt{bm} composed of \LTLf file \texttt{bm.ltlf} and variable partition file \texttt{bm.part}, follow these steps:

\begin{enumerate}
    \item Run \texttt{ltlf2fol} to convert the \LTLf file into a MONA input file \texttt{bm.mona}:

    \texttt{ltlf2fol NNF bm.ltlf > bm.mona}

    \item Run MONA to generate a DFA file:

    \texttt{mona -u -xw bm.mona > bm.dfa}

    \item Run Syft on the resulting DFA, with the system as the starting player (0) and partial observability (implemented via symbolic belief-states construction):

    \texttt{Syft bm.dfa bm.part 0 partial dfa}
\end{enumerate}

\paragraph{\syftmso} To run \syftmso on a benchmark \texttt{bm} composed of \LTLf file \texttt{bm.ltlf} and variable partition file \texttt{bm.part}, follow these steps:

\begin{enumerate}
    \item Run \texttt{ltlf2fol} to convert the \LTLf file into a MONA input file \texttt{bm.mona}:

    \texttt{ltlf2fol NNF bm.ltlf > bm.mona}

    \item Quantify the unobservable variables universally in the MONA file (see the script in \url{https://github.com/lucasmt/ltlf-po-benchmarks/blob/master/quantify.py} for an example), producing a new file \texttt{bm.mona.quant}.

    \item Remove the unobservable variables from the variable partition file, producing a new file \texttt{bm.part.quant}.

    \item Run MONA to generate a DFA file from the universally-quantified MSO formula:

    \texttt{mona -u -xw bm.mona.quant > bm.dfa.quant}

    \item Run Syft on the resulting DFA, with the system as the starting player (0) and full observability (since partial observability has already been handled by universal quantification):

    \texttt{Syft bm.dfa.quant bm.part.quant 0 full dfa}
\end{enumerate}

Note that for the benchmarks in \url{https://github.com/lucasmt/ltlf-po-benchmarks}, the \texttt{.mona}, \texttt{.part}, \texttt{.mona.quant} and \texttt{.part.quant} files have already been generated, and therefore only the last two steps are necessary in each case.

\fi

\end{document}